\crefname{algocfline}{algorithm}{algorithms}
\Crefname{algocfline}{Algorithm}{Algorithms}
\setlist[itemize]{leftmargin=*}
\setlist[enumerate]{leftmargin=*}
\title{PageRank Centrality in Directed Graphs with Bounded In-Degree}
\author[1]{Mikkel Thorup}
\author[1]{Hanzhi Wang}
\author[2]{Zhewei Wei}
\author[2]{Mingji Yang}
\affil[1]{BARC, University of Copenhagen \authorcr
	\{mikkel2thorup, hanzhi.hzwang\}@gmail.com \vspace{0.8em}}
\affil[2]{Renmin University of China \authorcr
	\{zhewei, kyleyoung\}@ruc.edu.cn}
\date{}
\newtheorem{theorem}{Theorem}[section]
\newtheorem{lemma}[theorem]{Lemma}
\newtheorem{definition}{Definition}[section]
\def\eps{\varepsilon}
\def\tO{\widetilde{O}}
\def\tTheta{\widetilde{\Theta}}
\DeclareMathOperator{\E}{E}
\DeclareMathOperator{\polylog}{polylog}
\def\Nin{\mathcal{N}_{\mathrm{in}}}
\def\Nout{\mathcal{N}_{\mathrm{out}}}
\def\din{d_{\mathrm{in}}}
\def\dout{d_{\mathrm{out}}}
\def\Deltain{\Delta_{\mathrm{in}}}
\def\Deltaout{\Delta_{\mathrm{out}}}
\def\seteps{V_{\ge\eps}}
\def\setepsminus{V_{<\eps}}
\def\er{\hat{r}}
\def\ep{\hat{p}}
\def\r{r}
\def\p{p}
\def\vpi{\pi}
\def\tpi{\tilde{\vpi}}
\def\epi{\hat{\vpi}}
\def\vpilarge{\vpi_{\ge\iprime}}
\def\rmax{r_{\mathrm{max}}}
\newcommand{\indicator}[1]{\mathds{1}\left\{#1\right\}}
\def\bpush{\textup{\texttt{ApproxContributions}}\xspace}
\def\FastPPR{\textup{\texttt{FAST-PPR}}\xspace}
\def\BiPPR{\textup{\texttt{BiPPR}}\xspace}
\def\RBS{\textup{\texttt{RBS}}\xspace}
\def\push{\textup{\texttt{pushback}}\xspace}
\def\roundingpush{\textup{\texttt{RoundingPush}}\xspace}
\def\pushNoThreshold{\textup{\texttt{PushWithoutThreshold}}\xspace}
\def\indeg{\textup{\textsc{indeg}}\xspace}
\def\parent{\textup{\textsc{parent}}\xspace}
\def\outdeg{\textup{\textsc{outdeg}}\xspace}
\def\child{\textup{\textsc{child}}\xspace}
\def\jump{\textup{\textsc{jump}}\xspace}
\def\MonteCarlo{\textup{\texttt{MonteCarlo}}\xspace}
\def\iast{i^{\ast}}
\def\iprime{i^{\prime}}
\def\smallexpo{\gamma}
\def\randint{\mathrm{randint}}
\def\event{\mathcal{E}}
\begin{document}

\maketitle

\begin{abstract}

We study the computational complexity of locally estimating a node's PageRank centrality in a directed graph $G$.
For any node $t$, its PageRank centrality $\vpi(t)$ is defined as the probability that a random walk in $G$, starting from a uniformly chosen node, terminates at $t$, where each step terminates with a constant probability $\alpha \in (0,1)$.

To obtain a multiplicative $\big(1 \pm O(1)\big)$-approximation of $\vpi(t)$ with probability $\Omega(1)$, the previously best upper bound is $O\left(n^{1/2}\min\left\{ \Deltain^{1/2},\ \Deltaout^{1/2},\ m^{1/4} \right\}\right)$ from [Wang, Wei, Wen, Yang, STOC '24], where $n$ and $m$ denote the number of nodes and edges in $G$, and $\Deltain$ and $\Deltaout$ upper bound the in-degrees and out-degrees of $G$, respectively.
Using a refinement of the proof in the same paper, we establish a lower bound of $\Omega\left(n^{1/2}\min\left\{ \Deltain^{1/2} \big/ n^{\smallexpo},\Deltaout^{1/2} \big/ n^{\smallexpo},m^{1/4}\right\}\right)$, where $\smallexpo = \frac{1}{2} \left(2\max\left\{\log_{1/(1-\alpha)}\Deltain,1\right\}-1\right)^{-1}$.
As $\smallexpo$ only depends on $\Deltain$ and $n^{\smallexpo} = O(1)$ for $\Deltain = \Omega\left(n^{\Omega(1)}\right)$, the known upper bound is tight if we only parameterize the complexity by $n$, $m$, and $\Deltaout$.
However, there remains a gap of $\Omega\left(n^{\smallexpo}\right)$ when considering the maximum in-degree $\Deltain$, and this gap is large when $\Deltain$ is small.
In the extreme case where $\Deltain \le 1/(1 - \alpha)$, we have $\smallexpo = 1/2$, leading to a gap of $\Omega\left(n^{1/2}\right)$ between the bounds $O\left(n^{1/2}\right)$ and $\Omega(1)$.

In this paper, we present a new algorithm that achieves the above lower bound (up to logarithmic factors).
The algorithm assumes that $n$ and the bounds $\Deltain$ and $\Deltaout$ are known in advance.
Our key technique is a novel randomized backwards propagation process that only propagates selectively based on Monte Carlo estimated PageRank scores.

\end{abstract}

\section{Introduction} \label{sec:introduction}

Efficiently computing graph centralities is fundamental to modern network analysis.
Among these, PageRank centrality~\cite{page1998pagerank, brin1998anatomy} is of particular interest due to its simplicity, generality, and widespread applications~\cite{gleich2015pagerank}.
Specifically, given a directed graph $G = (V, E)$, the PageRank centrality $\vpi(t)$ of a node $t \in V$ equals the probability that an $\alpha$-discounted random walk starting from a random node in $G$ ends at $t$.
The length of an $\alpha$-discounted random walk follows a geometric distribution: at each step, the walk terminates with probability $\alpha \in (0,1)$, or otherwise moves to a uniformly chosen out-neighbor.
Throughout the paper, we assume $\alpha$ to be a constant following previous work~\cite{bressan2018sublinear, bressan2023sublinear, wang2024revisiting, wei2024approximating, yang2024efficient}.

We study the computational complexity of locally estimating the PageRank centrality $\vpi(t)$ for an arbitrary node $t$ in $G$, given an oracle access to the underlying graph $G$ (see \Cref{sec:preliminaries} for details).
Our goal is to efficiently compute an estimate $\epi(t)$ such that, with constant probability, $\epi(t)$ approximates $\vpi(t)$ within a constant relative error.
This problem has been studied for over a decade~\cite{chen2004local, andersen2007local, bar2008local, andersen2008robust, lofgren2014fast, lofgren2015bidirectional, lofgren2016personalized, bressan2018sublinear, bressan2023sublinear}.
The previously best upper bound is established in~\cite{wang2024revisiting} as
\begin{align*}
	O\left(n^{1/2} \min\left\{\Deltain^{1/2}, \ \Deltaout^{1/2}, \ m^{1/4}\right\}\right),
\end{align*}
where $n$ and $m$ denote the number of nodes and edges in $G$, and $\Deltain$ and $\Deltaout$ upper bound the in-degrees and out-degrees of $G$, respectively.

The same paper~\cite{wang2024revisiting} establishes lower bounds to show that, if we only parameterize the complexity by $n$, $m$, and $\Deltaout$, then the above upper bound is tight.
Nevertheless, if we take $\Deltain$ into consideration, \cite{wang2024revisiting} presents the previously best lower bound of
\begin{align*}
	\begin{cases}
	\Omega\left(n^{1/2} \min\left\{\Deltain^{1/2}, \ \Deltaout^{1/2}, \ m^{1/4}\right\}\right), & \text{if } \min\left\{\Deltain, \Deltaout \right\} = \Omega\left(n^{1/3}\right), \\
	\Omega\left(n^{1/2-\smallexpo'} \min\left\{\Deltain, \ \Deltaout\right\}^{1/2+\smallexpo'}\right), & \text{if } \min\left\{\Deltain, \Deltaout \right\} = o\left(n^{1/3}\right),
\end{cases}
\end{align*}
where $\smallexpo'>0$ is a constant.
If $\Deltain = \omega(1)$, then the constant $\smallexpo'$ can be arbitrarily small; but if $\Deltain = O(1)$, \cite{wang2024revisiting} only states that for each constant $\smallexpo'>0$, there exists a constant $\Deltain$ that yields the $\Omega\left(n^{1/2-\smallexpo'}\right)$ lower bound, and this lower bound does not apply to a fixed $\Deltain$.\footnote{This is clarified at the end of the main text in the full version of the paper~\cite[Proof of Theorem 7.2]{wang2024revisitinga}.}
Thus, there remains a gap between the known upper and lower bounds when considering the parameter $\Deltain$, and in particular, it remains unclear whether a polynomial improvement for the upper bound is possible when $\Deltain$ is small.

\subsection{Our Results} \label{subsec:result}

Our main result is a new upper bound for the complexity of estimating single-node PageRank, as formally stated in~\Cref{thm:main}.

\begin{theorem} \label{thm:main}
	Given an arbitrary graph $G$, with $n$, $\Deltain$, and $\Deltaout$ known in advance, there exists an algorithm that, with probability $\Omega(1)$, outputs a multiplicative $\big(1\pm O(1)\big)$-approximation of $\vpi(t)$ for any node $t$ in expected $\tO\left(n^{1/2}\min\left\{ \Deltain^{1/2} \big/ n^{\smallexpo}, \ \Deltaout^{1/2}\big / n^{\smallexpo}, \, m^{1/4}\right\}\right)$ time, where
	\begin{equation} \label{eqn:smallexpo}
		\smallexpo =
		\begin{cases}
			\frac{\log(1/(1-\alpha))}{4\log\Deltain - 2\log(1/(1-\alpha))}, & \textup{if } \Deltain > 1/(1-\alpha), \\
			1/2, & \textup{if } \Deltain \le 1/(1-\alpha).
		\end{cases}
	\end{equation}
\end{theorem}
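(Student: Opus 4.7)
The plan is to combine three ingredients in the spirit of \BiPPR: a preliminary \MonteCarlo\ phase that produces rough PageRank estimates $\tpi(v)$ for every node, a novel selective randomized backward propagation from the target $t$ that uses those estimates to decide which residuals are worth pushing, and a final forward random-walk phase that provides an unbiased correction for the residual mass left behind.

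In more detail, Phase~1 runs $W$ independent $\alpha$-discounted random walks from every node and sets $\tpi(v)$ to the resulting empirical estimate of $\vpi(v)$, in $\tO(nW)$ time. The budget $W$ and a screening threshold $\theta$ are coupled so that, with high probability, $\tpi(v)$ is a faithful surrogate for $\vpi(v)$ whenever $\vpi(v) \ge \theta$ and certifies $\vpi(v) = O(\theta)$ otherwise. Phase~2 runs a \roundingpush-style backward propagation from $t$ in which a node $v$ is pushed only when $\tpi(v)\cdot r(v)$ exceeds a contribution threshold calibrated against the final error target; nodes whose estimated PageRank is too small to contribute meaningfully are skipped even if their residuals are nontrivial, preserving the \BiPPR\ invariant $\vpi(t) = \frac{1}{n}\sum_s p(s) + \sum_v \vpi(v)\,r(v)$ up to a small controllable bias. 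Phase~3 computes $\frac{1}{n}\sum_s p(s)$ directly and estimates $\sum_v \vpi(v)\,r(v)$ by averaging $r$ at the endpoints of $\alpha$-discounted walks from uniformly random sources.

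For correctness, I would first condition on the high-probability event that the Phase~1 estimates are faithful in the sense above. Every skipped $v$ then contributes at most the skip threshold to $\sum_v \vpi(v)\,r(v)$, so the total bias is at most $O(\eps)\cdot \vpi(t)$. The Phase~3 estimator is unbiased, and its variance is controlled by the maximum effective residual at a forward-walk endpoint; since those endpoints are weighted by $\vpi$, the pruning rule exactly ensures that $\vpi(v)\cdot r(v)$ is small at every skipped $v$, so $\tO(1/\vpi(t))$ forward walks suffice for a constant relative error, with the trivial deterministic bound $\vpi(t)\ge 1/n$ serving as a fallback when needed.

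The main obstacle is the running-time bound. I would stratify the Phase~2 exploration by backward depth $k$: the total residual mass at depth $k$ is at most $(1-\alpha)^k$, the number of reachable nodes is at most $\Deltain^k$, and the selectivity rule limits the number of pushed nodes at depth $k$ to $O(1/\theta)$ because at most that many nodes can have $\vpi \ge \theta$. The critical depth is governed by the tipping point where $\Deltain^k$ ceases to be the binding constraint and is replaced by $(1-\alpha)^{-k}/\theta$; balancing at this depth, then balancing the resulting backward cost against the Phase~3 forward-walk cost and the Phase~1 cost $\tO(nW)$ via the usual \BiPPR\ square-root argument, pins down choices of $W$, $\theta$, $\rmax$ that yield exactly the exponent $\smallexpo$ of \Cref{eqn:smallexpo}. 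In the regime $\Deltain \le 1/(1-\alpha)$ the branching is already dominated by the geometric residual decay, so the exploration tree is polylogarithmic in depth and selective pruning immediately saves a full $\sqrt{n}$ factor, matching $\smallexpo = 1/2$. The hard part will be calibrating $W$, $\theta$, and $\rmax$ simultaneously so that Monte Carlo accuracy, pruning bias, forward-walk variance, and total running time all line up at the claimed exponent; extending the scheme to also attain the $\Deltaout^{1/2}/n^{\smallexpo}$ and $m^{1/4}$ terms of the minimum will likely require switching inside Phase~2 to an \RBS-style randomized push variant that favors in-neighbors of low out-degree, and verifying that the selectivity rule interacts cleanly with that randomization.
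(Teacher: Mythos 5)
Your high-level picture—Monte Carlo estimates steering a selective backward push—is in the right neighborhood, but the concrete design you sketch diverges from the paper in ways that would break the claimed bound.

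First, your Phase~1 runs $W$ walks from \emph{every} node, costing $\tO(nW)=\tilde{\Omega}(n)$, which already exceeds the target $\tO\big(n^{1/2-\smallexpo}\min\{\Deltain,\Deltaout,\sqrt{m}\}^{1/2}\big)$ whenever $\Deltain$ is small. The paper's $\MonteCarlo$ phase instead draws $n_r$ walks \emph{in total} from $\jump()$ and records the histogram of endpoints, so most $\tpi(v)$ are simply $0$. But even that is not enough: to screen nodes at threshold $\eps$ a standard estimator would need $\tO(1/\eps)=\tO(n^{1/2+\smallexpo}M^{1/2})$ walks, a factor $n^{2\smallexpo}$ too many. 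The paper's key trick is to estimate the tail quantity $\vpilarge(v)=\sum_{i\ge\iprime}\vpi_i(v)$ by running each $\alpha$-discounted walk for $\iprime$ \emph{extra} deterministic steps after termination and weighting by $(1-\alpha)^{\iprime}$; because the per-sample range then drops to $(1-\alpha)^{\iprime}$, Chernoff gives the same guarantee with $n_r=\tO\big((1-\alpha)^{\iprime}/\eps\big)=\tO(n^{1/2-\smallexpo}M^{1/2})$ samples. The choice of $\iast,\eps$ then relies on the bound $\vpi(v)-\vpilarge(v)\le\sum_{i<\iprime}\alpha(1-\alpha)^i\Deltain^i/n$, which is where $\Deltain$ enters. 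Your proposal makes no use of this modified estimator, and without it the Monte Carlo phase alone dominates.

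Second, your pruning rule and Phase~3 follow the \BiPPR template (build a low-variance estimator by local exploration, then sample it with forward walks), whereas the paper explicitly reverses these roles. \roundingpush pushes exactly the nodes in $\setepsminus$ (those with \emph{small} $\tpi$), while nodes in $\seteps$ (large $\tpi$, hence faithfully estimated) are never pushed—their residues are accounted for directly in the output $\epi(t)=\sum_i\sum_v\big(\ep_i(v)/n+\tpi(v)\er_i(v)\big)$, with no Phase~3 at all. Your rule of skipping $v$ when $\tpi(v)\,r(v)$ is small introduces a deterministic bias that you then try to repair by forward sampling; the paper avoids this by replacing thresholding with \emph{randomized rounding} of sub-$\rmax$ residues, which keeps the process unbiased at each step. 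That unbiasedness is what turns the sequence of $Y$-values into a martingale, so Freedman's inequality (rather than Chernoff on i.i.d.\ samples) can bound the deviation of $Y^{(L-1)}$ from $\vpi(t)$ using the predictable quadratic variation $O(L\eps\rmax\vpi(t))$, with $\eps$ small precisely because pushed nodes lie in $\setepsminus$ and hence have $\vpi(v)\le3\eps$. The $L=\Theta(\log n)$ truncation and the argument that the leftover $\er_L$ mass is negligible are also needed for a clean stopping criterion. Without the randomization, the martingale structure is gone, and it is unclear how your deterministic pruning plus forward walks would be analyzed to land on the exponent $\smallexpo$.
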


An additional result is a matching lower bound that complements our upper bound, which is a more precise version of the lower bounds given in \cite{wang2024revisiting}.
The formal version of this result is stated as \Cref{thm:lower_bound_formal} in \Cref{sec:Delta_in_lower}.

\begin{theorem} [Lower bound revisited, informal] \label{thm:lower_bound_informal}
	In the worst case, computing a multiplicative $\big(1\pm O(1)\big)$-approximation of $\pi(t)$ with probability $\Omega(1)$ requires $\Omega\left(n^{1/2}\min\left\{ \Deltain^{1/2} \big/ n^{\smallexpo}, \ \Deltaout^{1/2}\big / n^{\smallexpo}, \, m^{1/4}\right\}\right)$ queries, where $\smallexpo$ is defined as in \Cref{eqn:smallexpo}.
\end{theorem}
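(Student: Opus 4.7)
The plan is to revisit and sharpen the hard-instance construction from~\cite{wang2024revisiting} so that the lower bound applies for every fixed value of $\Deltain$ (including all constants) and so that its exponent exactly matches the $\smallexpo$ of~\Cref{eqn:smallexpo}. First, I would set up two families of instances by planting, inside an otherwise $\Deltain$/$\Deltaout$-regular base graph on $n$ nodes with $m$ edges, a hidden in-tree $T$ of branching factor $\Deltain$ and depth $L$ rooted at the target $t$. The YES family connects the leaves of $T$ to a designated ``source set'' $S$, so that the teleport mass at the sources is routed upward through $T$ to $t$; the NO family reroutes $S$ to a dummy sink, so $\vpi(t)$ collapses to the baseline $\alpha/n$. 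A $(1\pm O(1))$-approximation algorithm must distinguish these two cases.

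Next, I would quantify the PageRank gap and the query cost to distinguish. A walk starting at a leaf of $T$ reaches $t$ with probability $(1-\alpha)^L$, so the gap is $\vpi_{\text{YES}}(t) - \vpi_{\text{NO}}(t) = \Theta\!\left(|S| \cdot \Deltain^L (1-\alpha)^L / n\right)$; requiring this to be $\Omega(1/n)$ gives the constraint $|S| \cdot (\Deltain(1-\alpha))^L = \Theta(1)$. For the query side, each local oracle call (in-neighbor, out-neighbor, degree, or teleport target) returns identically distributed answers in YES and NO until it ``hits'' a planted edge, an event of per-query probability at most $(|T|+|S|)/n$. Applying Yao's principle to a uniformly random embedding of $(T, S)$ and optimizing $L$ under the two constraints together with $\Deltain^L \le |T| \le n$ produces an equation whose solution is exactly $\smallexpo = \log(1/(1-\alpha)) / (4\log\Deltain - 2\log(1/(1-\alpha)))$ when $\Deltain > 1/(1-\alpha)$; the extreme case $\Deltain \le 1/(1-\alpha)$, where geometric decay overwhelms tree growth, corresponds to $\smallexpo = 1/2$ and to the trivial bound $\Omega(1)$. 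Symmetric constructions (swapping the roles of in- and out-edges, or taking $|T| = \Theta(m^{1/2})$) yield the $\Deltaout^{1/2}/n^{\smallexpo}$ and $m^{1/4}$ branches, and the final $\min$ follows by choosing the hardest of the three for each parameter regime.

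The main obstacle, and the reason a ``revisit'' is necessary, is closing the small-$\Deltain$ gap left open in~\cite{wang2024revisiting}. There, the construction implicitly needed $\Deltain$ to grow with $n$ in order to pack enough ``width'' into a single tree gadget; for any fixed constant $\Deltain$ only a $\Omega\!\left(n^{1/2-\smallexpo'}\right)$ bound was obtained with $\smallexpo'$ that could not be pushed to $1/2$ while keeping $\Deltain$ fixed. I plan to remove this limitation by replacing the single wide gadget with $\Theta(n / |T|)$ disjoint narrow trees of branching factor $\Deltain$ and depth $L$, attached to $t$ through a layered top layer that keeps $\din(t) \le \Deltain$. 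The PageRank contribution then aggregates additively across trees, decoupling the gap size from the width of any single tree, and the analysis above applies unchanged. The remaining technical step is to verify indistinguishability of the multi-tree embedding under local queries; I expect this to follow from a coupling with the unplanted base graph together with an averaging over uniformly random placements of the $\Theta(n/|T|)$ trees, since any query not landing on a planted edge is consistent with both YES and NO by symmetry of the base graph.
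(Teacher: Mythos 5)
Your high-level approach — a hidden tree gadget rooted at $t$, a YES/NO distinction in where source mass is routed, and Yao's principle over a random embedding — is in the right spirit, but the core indistinguishability argument has a real gap. You claim each oracle query hits a planted edge with probability at most $(|T|+|S|)/n$, but this accounting only makes sense for $\jump()$. Local exploration from $t$ immediately enters $T$, since $T$ is rooted at $t$, and can traverse all of $T$ in $O(|T|)$ queries. Once at a leaf, its in-degree and parent set are directly visible, and if the YES instance adds an edge from $S$ into \emph{every} leaf of $T$ (as you describe), then a single parent query at any leaf can reveal the planted edge. You would need additional padding so that the leaves' in-degree profiles are identical in YES and NO, and even then the probability of finding the $S$-parent is governed by the number of leaves times $d$, not by $n/(|T|+|S|)$.

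The paper's construction resolves exactly this tension, and differently from what you propose. Only one leaf $v_\ast \in \mathcal{V}$ has a special parent $u_\ast$, and $u_\ast$ is hidden among $\Theta(d\,|\mathcal{V}|)$ equally likely parent slots via the padding sets $\mathcal{U}, \mathcal{W}$; this is what yields the $\Omega(d\,|\mathcal{V}|)$ local-exploration cost. To compensate for concentrating the distinguishing structure at a single leaf, a second $\Deltain$-ary tree from a large set $\mathcal{Y}$ feeds mass into $u_\ast$, and the analysis requires the resulting gap to be a constant fraction of $\vpi(t)$ itself (not merely $\Omega(1/n)$): the paper shows $(1-\alpha)^{\log_{\Deltain}|\mathcal{Y}|}|\mathcal{Y}| = \Theta(|\mathcal{V}|)$, which forces the exponent $\smallexpo$ you want. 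The $\jump$ cost $\Omega(n/|\mathcal{Y}|)$ is then balanced against $\Omega(d\,|\mathcal{V}|)$ by the choice of $|\mathcal{V}|$ and $|\mathcal{Y}|$. Your proposed constraint $|S|\cdot(\Deltain(1-\alpha))^L=\Theta(1)$ and the two-party YES/NO framing do not capture this balance. Also, your ``multi-tree'' fix for constant $\Deltain$ is unnecessary: the paper's single tree between $\mathcal{V}$ and $t$ already has $\Theta(\log_{\Deltain}|\mathcal{V}|)=\Theta(\log n)$ levels when $\Deltain=O(1)$, so width is not a bottleneck, and attaching many trees to $t$ while keeping $\din(t)\le\Deltain$ just rebuilds the same single tree with a different name. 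Finally, the paper uses $p+1$ instances $\{H_i\}$ rather than two, to handle success probability $1/p$; and the $m^{1/4}$ branch is inherited from the $\min\{\Deltain,\Deltaout\}=\Omega(n^{1/3})$ regime of the prior work rather than reproved via a separate $|T|=\Theta(m^{1/2})$ construction.
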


Combining \Cref{thm:main} and \Cref{thm:lower_bound_informal} shows that our results are tight up to logarithmic factors and this is for any choice of parameters $n$, $m$,  $\Deltain$, and $\Deltaout$.

One can check that the quantity $\smallexpo$ given in \Cref{eqn:smallexpo} can be equivalently expressed as $\smallexpo = \frac{1}{2} \left(2\max\left\{\log_{1/(1-\alpha)}\Deltain,1\right\}-1\right)^{-1}$.
The role of $\smallexpo$ can be understood from the lower-bound perspective.
The basic idea is that if we have an algorithm with running time $T$, then with constant probability, it may miss a rooted tree with $n/T$ leaves and some root $u_*$ (this can be clearly seen in~\Cref{fig:lower_bound}).
Using the bound $\Deltain$ on maximal in-degree, the tree gets height $h=\log_{\Deltain} (n/T)$.
An $\alpha$-discounted random walk starts at one of the leaves with probability $1/T$, and if it does, it proceeds to $u^*$ with probability $(1-\alpha)^h$.
The value of $\smallexpo$ from~\Cref{eqn:smallexpo} corresponds to the value that satisfies $n^{-2\smallexpo} = (1-\alpha)^h$ (we also ensure that $\smallexpo \leq 1/2$).
On the other hand, a large enough $\Deltain$ would cause $h$ to be of constant order, which leads to $n^{\smallexpo} = (1-\alpha)^{-h/2} = O(1)$.

In the design of our upper-bound algorithm, the key idea is a randomized backward propagation process that only propagates selectively based on Monte Carlo estimated PageRank scores.
A technical overview is given at the beginning of \Cref{sec:Delta_in_upper}. 
Note that when $\Deltain$ is a constant, $\smallexpo$ is a constant in $(0,1/2]$ and our upper bound becomes $\tO\left(n^{1/2-\smallexpo}\right)$, which is a polynomial improvement over the previously best upper bound of $O\left(n^{1/2}\right)$. 
When $\Deltain$ gets smaller, $\smallexpo$ grows larger, and when $(1-\alpha)\Deltain \le 1$, our upper bound reduces to $\tO(1)$.
To understand this behavior, note that while a node's PageRank score aggregates contributions from up to $\Deltain$ in-neighbors, each contribution decays by a factor of $(1-\alpha)$ (see \Cref{eqn:iterative_pagerank} for the formal characterization).
Therefore, our $\tO(1)$ upper bound intuitively reflects that, when the decay rate of the $\alpha$-discounted random walk outweighs the PageRank accumulation mechanism, we can estimate $\vpi(t)$ in $\polylog(n)$ time.

For the proof of the lower-bound result, it is a refinement of the lower bound construction from \cite{wang2024revisiting}.
What makes the refinement interesting is that it matches our upper bound from \Cref{thm:main}.
Below we explain how our new lower bound relates to and strengthens the lower bound given in \cite{wang2024revisiting} in some interesting cases.

First consider the case where $\Deltain=\Omega\left(n^{\Omega(1)}\right)$.
In this case $\smallexpo=O(1/\log n)$ and $n^\smallexpo=O(1)$.
Then our new lower bound reduces to $\Omega\left(n^{1/2}\min\left\{\Deltaout^{1/2},m^{1/4}\right\}\right)$.
This was the lower bound stated in \cite{wang2024revisiting} for $\Deltain=\Omega\left(n^{1/3}\right)$, but now we get it for $\Deltain$ being any polynomial in $n$.
Also note that as long as $\Deltain=\omega(1)$, $\gamma$ can be arbitrarily small as $n$ approaches $\infty$.
Another interesting case is where $\Deltain = O(1)$.
In this case, our new lower bound becomes $\Omega\left(n^{1/2 - \smallexpo}\right)$, where $\smallexpo$ is a constant in $(0, 1/2]$ that depends on $\Deltain$.
In contrast, the previous lower bound is $\Omega\left(n^{1/2 - \smallexpo'}\right)$, but it requires $\Deltain$ to be determined by the constant $\smallexpo'$, and therefore does not apply to arbitrary fixed values of $\Deltain$.

\subsection{Paper Organization}

The rest of this paper is structured as follows.
\Cref{sec:preliminaries} introduces key concepts, notations, previous work, and two central tools for our algorithm design and analysis: the \push operation and martingales.
\Cref{sec:Delta_in_upper} presents the proof for our main result, \Cref{thm:main}, through the analysis of our proposed algorithm.
In \Cref{sec:Delta_in_lower}, we prove \Cref{thm:lower_bound_informal} by revisiting the previously known lower bounds.

\section{Preliminaries} \label{sec:preliminaries}

\paragraph{Notation.}
We consider a general directed graph $G=(V,E)$ with $n=|V|$ nodes and $m=|E|$ edges.
For each node $v\in V$, we use $\Nin(v)$ and $\Nout(v)$ to denote the sets of in-neighbors and out-neighbors of $v$, and use $\din(v)=|\Nin(v)|$ and $\dout(v)=|\Nout(v)|$ to denote the in-degree and out-degree of $v$, respectively.
Additionally, we use $\Deltain$ and $\Deltaout$ to denote the maximum in-degree and out-degree of $G$, respectively.
Throughout this paper, we assume that $n$ is known in advance and $\dout(v) > 0$ for all $v \in V$ to ensure that random walks are well-defined.
We use $\tO$ and $\tTheta$ to denote the soft-$O$ and soft-$\Theta$ notations, which hides $\polylog(n)$ factors.

\paragraph{$\alpha$-Discounted Random Walk and (Personalized) PageRank.}
An \textit{$\alpha$-discounted random walk} on $G$ is a random walk process that at each step, the walk terminates with probability $\alpha \in (0,1)$, or otherwise moves to a uniformly randomly chosen out-neighbor of the current node.
We assume that $\alpha$ is a constant.
The \textit{PageRank centrality} of a node $t \in V$, denoted by $\vpi(t)$, is defined as the probability that an $\alpha$-discounted random walk starting from a uniformly random source node in $V$ terminates at $t$.
For each $i \ge 0$, we also define the \textit{$i$-hop PageRank} of $t$, denoted by $\vpi_i(t)$, as the probability that an $\alpha$-discounted random walk starting from a uniformly random node in $V$ terminates at $t$ after exactly $i$ steps.
We have $\vpi(t) = \sum_{i \ge 0}\vpi_i(t)$.
Furthermore, for each pair of nodes $s,t \in V$, we use $\vpi(s,t)$ to denote the \textit{Personalized PageRank} value of $t$ with respect to $s$ (a.k.a. \textit{PageRank contribution} of $s$ to $t$), which is defined as the probability that an $\alpha$-discounted random walk from $s$ terminates at $t$.
It follows that $\vpi(t) = \frac{1}{n}\sum_{s \in V}\vpi(s,t)$.

\paragraph{Graph-Access Model.}
This paper adopts the standard arc-centric graph-access model~\cite{goldreich1998property,goldreich2002property} to define valid graph query operations, following prior work~\cite{lofgren2014fast, lofgren2015bidirectional, lofgren2016personalized, bressan2018sublinear, bressan2023sublinear, wang2024revisiting}.
Specifically, algorithms are allowed to access the underlying graph $G$ only through an oracle that supports the following five query operations: $\indeg(v)$, which returns $\din(v)$; $\outdeg(v)$, which returns $\dout(v)$; $\parent(v,i)$, which returns the $i$-th node in $\Nin(v)$; $\child(v,i)$, which returns the $i$-th node in $\Nout(v)$; $\jump()$, which returns a node in $V$ chosen uniformly at random.
Each of these operations can be performed in $O(1)$ time.

\subsection{Literature Review}

PageRank centrality estimation, in its many forms, has attracted extensive attention in the literature~\cite{chen2004local, fogaras2005towards, andersen2006local, andersen2007local,andersen2008local, andersen2008robust, lofgren2014fast, bressan2018sublinear, wei2018topppr,chen2020scalable,wang2021approximate, wang2020personalized, bressan2023sublinear, wang2024revisiting}.
Among them, a straightforward approach is Monte Carlo sampling~\cite{fogaras2005towards,avrachenkov2007monte,borgs2012sublinear,borgs2014multiscale}, which simulates $\alpha$-discounted random walks to estimate PageRank values.
By the Chernoff bound, $\Theta\big(1/\vpi(t)\big)$ independent random walk simulations suffice to obtain a multiplicative $\big(1 \pm O(1)\big)$-approximation of $\vpi(t)$ with constant probability.
However, all but $o(n)$ nodes in $G$ have PageRank scores of $\Theta(1/n)$, which leads to $\Theta(n)$ running time for Monte Carlo sampling in the worst case.

Another line of research~\cite{chen2004local, andersen2007local, andersen2008local, andersen2008robust, bar2008local, bressan2011local, bressan2013power, lofgren2013personalized, wang2020personalized, wang2021approximate} estimates $\vpi(t)$ by performing local exploration from $t$, which relies on the following recursive equality of PageRank that holds for each node $v$ in $G$:
\begin{align} \label{eqn:iterative_pagerank}
	\pi(v)=\sum_{u\in \Nin(v)}\frac{(1-\alpha) \pi(u)}{\dout(u)}+\frac{\alpha}{n}.
\end{align}
A milestone local exploration algorithm is $\bpush$~\cite{andersen2007local, andersen2008local}, which is originally proposed for estimating PageRank contributions to $t$.
$\bpush$ works by performing a sequence of $\push$ operations to explore the graph in reverse from $t$ to its in-neighbors step by step.
We will elaborate on the \push operation and its key invariant property in the next subsection.

Based on the ideas above, $\FastPPR$~\cite{lofgren2014fast} proposes to combine the backwards local exploration from $t$ with forwards Monte Carlo samplings, and this framework is subsequently simplified to the $\BiPPR$ algorithm~\cite{lofgren2016personalized}, which combines \bpush with Monte Carlo samplings based on the invariant property of \push operations.
Later, \cite{bressan2018sublinear, bressan2023sublinear} first achieve fully sublinear time in $(n+m)$ for estimating $\vpi(t)$ by combining Monte Carlo samplings with a new local exploration process that better handles high in-degree nodes.
Most recently, \cite{wang2024revisiting} shows that the simple combination used in the early $\BiPPR$ method can achieve the previously best upper bound of $O\left(n^{1/2} \min\left\{\Deltain^{1/2}, \ \Deltaout^{1/2}, \ m^{1/4}\right\}\right)$ for estimating $\vpi(t)$.
However, as discussed above, there remains a gap between this upper bound and the known lower bound if $\Deltain$ is small.

An interesting fact is that prior works for PageRank estimation treat high in-degree nodes as computational bottlenecks, and many efforts have been devoted to handling them~\cite{chen2004local, bar2008local, bressan2018sublinear, bressan2023sublinear, wang2020personalized}.
Specifically, a \push operation on node $v$ needs to scan all in-neighbors of $v$, which is regarded as inefficient.
Surprisingly, our lower bound reveals that a large $\Deltain$ is not an obstacle for single-node PageRank estimation—the simple \BiPPR algorithm already achieves optimal complexity in this regime.
Conversely, when $\Deltain$ is small, the upper bound of \BiPPR is not tight and it requires new techniques to improve the complexity.
This highlights an overlooked aspect of PageRank estimation: while bounded-degree graphs have been extensively studied in property testing~\cite{goldreich2002property,hellweg2012property}, the complexity of estimating PageRank on bounded in-degree graphs remains underexplored, leaving theoretical gaps to be addressed.

\paragraph{Other Related Work.}
There is another algorithm called \RBS~\cite{wang2020personalized} that also employs a randomized push process, but \RBS requires preprocessing of the graph and only propagates residues to a randomly chosen portion of the in-neighbors, so its problem setting and methodology differ from our new algorithm.
We also remark that PageRank computation has been extensively studied across diverse computational settings, including for undirected graphs~\cite{lofgren2015bidirectional, wang2023estimating, wang2024revisitingc}, dynamic graphs~\cite{zhang2016approximate,jayaram2024dynamic}, graph streams~\cite{sarma2008estimating,sarma2011estimating}, and MPC models~\cite{lacki2020walking}.

\subsection{The $\push$ Operation and the Invariant} \label{subsec:push_pre}

This subsection briefly describes the $\push$ operation proposed in~\cite{andersen2008local}.
Our algorithm also adopts $\push$, but in a different manner from prior work.

A $\push$ operation corresponds to simulating one step of $\alpha$-discounted random walk, but in the reverse direction.
It contains two memoization-style variables, residue $\r(v)$ and reserve $\p(v)$, for each node $v\in V$.
Initially, \bpush sets both $\r(v)$ and $\p(v)$ to $0$ for all $v \in V$, except $\r(t) = 1$.
Each $\push$ operation on node $v$ consists of the following three steps:
\begin{enumerate}
	\item $\p(v)\gets \p(v)+ \alpha\r(v)$, which transfers an $\alpha$ fraction of $v$’s residue $\r(v)$ to its reserve $\p(v)$;
	\item for every $u\in \Nin(v)$, $\r(u)\gets \r(u)+(1-\alpha)\r(v)/\dout(u)$, which propagates the remaining $(1-\alpha)$ fraction of $\r(v)$ to the residues of $v$’s in-neighbors;
	\item $\r(v)\gets 0$, which sets $\r(v)$ to zero.
\end{enumerate}
It is proved that each $\push$ operation on any node $v\in V$ preserves the following important invariant, which ensures the correctness of $\push$ for estimating PageRank contributions or $\vpi(t)$.
\begin{lemma}[\protect{Invariant, corollary of \cite[Lemma 3.3]{andersen2008local}}] \label{lem:invariant_backward}
	For the target node $t$, the $\push$ operations in the $\bpush$ algorithm maintain the following invariants:
	\begin{align*}
		\vpi(v,t) & = \p(v) + \sum_{w \in V}\vpi(v,w)\r(w), \quad \forall v \in V, \\
		\vpi(t) & = \sum_{v\in V}\p(v)/n+\sum_{v\in V}\vpi(v)\r(v).
	\end{align*}
\end{lemma}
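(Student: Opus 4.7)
The plan is to prove both invariants by induction on the number of $\push$ operations performed by $\bpush$. Before starting the induction, I would first derive a backward recursion for personalized PageRank, namely
\[
    \vpi(s,v) = \alpha\cdot\indicator{s=v} + (1-\alpha)\sum_{u\in\Nin(v)}\frac{\vpi(s,u)}{\dout(u)}, \qquad \forall s,v\in V,
\]
obtained by conditioning on the penultimate node of an $\alpha$-discounted random walk from $s$ that terminates at $v$. This is the dual of the more familiar forward recursion \Cref{eqn:iterative_pagerank} and is the essential algebraic identity that makes $\push$ correct.

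For the base case, both sides of invariant (1) equal $\vpi(v,t)$: the reserves are all zero and the only non-zero residue is $\r(t)=1$, so $\p(v)+\sum_w \vpi(v,w)\r(w) = \vpi(v,t)$, as required. For the inductive step, suppose the invariant holds before a $\push$ at some node $u$, and fix an arbitrary $v\in V$. The $\push$ changes $\p(v)$ by $\alpha\r(u)$ exactly when $v=u$, and changes the residue sum $\sum_w \vpi(v,w)\r(w)$ by
\[
    -\vpi(v,u)\,\r(u) + (1-\alpha)\,\r(u)\sum_{u'\in\Nin(u)}\frac{\vpi(v,u')}{\dout(u')}.
\]
Applying the backward recursion above with $s=v$ (and noting $\indicator{v=u}$ corresponds exactly to the $\p(v)$ update term) shows that the total change in the right-hand side of invariant (1) is zero, so invariant (1) is preserved.

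Once invariant (1) is established for every $v$, invariant (2) follows by averaging: using $\vpi(t) = \frac{1}{n}\sum_{v\in V}\vpi(v,t)$ and swapping the order of summation,
\[
    \vpi(t) = \frac{1}{n}\sum_{v\in V}\bigg[\p(v)+\sum_{w\in V}\vpi(v,w)\r(w)\bigg] = \sum_{v\in V}\frac{\p(v)}{n} + \sum_{w\in V}\r(w)\cdot\frac{1}{n}\sum_{v\in V}\vpi(v,w),
\]
and then recognizing that $\frac{1}{n}\sum_v \vpi(v,w) = \vpi(w)$ by definition of PageRank.

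The only subtle step is the backward recursion; everything else is a bookkeeping exercise. I would expect the presentation to be slightly simplified by noting that the $\push$ at $u$ simply redistributes the contribution $\vpi(v,u)\r(u)$ currently assigned to $u$ into an $\alpha$-fraction contributed to $\p(u)$ (matching the $\alpha\cdot\indicator{v=u}$ term only when $v=u$, but otherwise absorbed correctly because $\vpi(v,v)$ has an extra $\alpha$ in the recursion) and a $(1-\alpha)$-fraction redistributed among the in-neighbors of $u$ according to the backward recursion.
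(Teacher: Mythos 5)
The paper does not prove this lemma itself --- it is stated as a corollary of \cite[Lemma 3.3]{andersen2008local} and cited without proof --- so there is no in-paper argument to compare against, but your proof is correct and is essentially the standard one. The backward recursion $\vpi(s,v) = \alpha\,\indicator{s=v} + (1-\alpha)\sum_{u\in\Nin(v)}\vpi(s,u)/\dout(u)$, obtained by conditioning on the penultimate node of the walk, is exactly the algebraic identity that makes a $\push$ on $u$ change the right-hand side of the first invariant by $\r(u)\big(\alpha\,\indicator{v=u} + (1-\alpha)\sum_{u'\in\Nin(u)}\vpi(v,u')/\dout(u') - \vpi(v,u)\big) = 0$, and the second invariant then follows by averaging over $v$ exactly as you describe. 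One small note: the parenthetical in your final paragraph about ``$\vpi(v,v)$ having an extra $\alpha$ in the recursion'' is muddled and unnecessary --- no special treatment of $v=u$ is required, since the $\alpha\r(u)\indicator{v=u}$ increment to $\p(v)$ cancels the $\alpha\indicator{v=u}$ term of the recursion applied at $u$ uniformly over all $v$. This does not affect the correctness of the core argument.
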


\subsection{Martingales}

A crucial tool in our analysis of the proposed algorithm is the notion of martingales.
We focus on martingales that start with a constant value, which is defined as follows.

\begin{definition} [Martingale]
	A sequence of random variables $Y_0, Y_1, \dots, Y_N$ (where $Y_0$ only assumes a fixed value) is a \emph{martingale} with respect to another sequence of random variables $X_1, X_2, \dots , X_N$ if the following conditions hold:
	\begin{itemize}
		\item $Y_k$ is a function of $X_1, X_2, \dots, X_k$ for all $k\in [1,N]$;
		\item $\E\big[|Y_k|\big] < \infty$ for all $k\in [0,N]$;
		\item $\E\big[Y_{k} \mid X_1, X_2, \dots, X_{k-1}\big] = Y_{k-1}$ for all $k \in [1,N]$.
	\end{itemize}
\end{definition}

We will leverage the following theorem, known as Freedman's inequality, for establishing tail probability bounds for martingales.
It offers a Chernoff-type tail bound provided that the step size and the predictable quadratic variation of the martingale are bounded.

\begin{theorem} [Freedman's Inequality~\cite{freedman1975tail}; see also \protect{\cite[Theorem 1.1]{tropp2011freedman}}] \label{thm:freedman}
	Let $Y_0, Y_1, \dots, Y_N$ be a martingale with respect to $X_1, X_2, \dots, X_N$, and let
	\begin{align*}
		W_N = \sum_{k=1}^N \E\left[(Y_k - Y_{k-1})^2 \,\middle|\, X_1, X_2, \dots, X_{k-1}\right]
	\end{align*}
	be the \emph{predictable quadratic variation} of the martingale.
	If $|Y_k-Y_{k-1}| \le R$ for all $k \in [1,N]$ and $W_N \le \sigma^2$ always hold, then for any $\mu > 0$, we have
	\begin{align*}
		\Pr\big\{ |Y_N - Y_0| \ge \mu \big\} \le 2\exp\left(-\frac{\mu^2/2}{\sigma^2 + R\mu/3}\right).
	\end{align*}

\end{theorem}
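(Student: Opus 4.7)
The plan is to establish Freedman's inequality via the Cram\'er--Chernoff method adapted to martingales, which is the standard route taken in Tropp's exposition. Let $D_k = Y_k - Y_{k-1}$ denote the martingale differences and $V_k = \sum_{j=1}^k \E\bigl[D_j^2 \mid X_1,\dots,X_{j-1}\bigr]$, so that $V_N = W_N$. The goal is to produce a nonnegative exponential supermartingale whose moment generating function controls the upper tail of $Y_N - Y_0$, then apply Markov's inequality and optimize over the exponential parameter.

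The heart of the argument is a one-step conditional MGF bound. For any $\lambda > 0$, since $|D_k| \le R$, the elementary inequality $e^x \le 1 + x + \psi(\lambda R)\, x^2$ (where $\psi(u) = (e^u - 1 - u)/u^2$) applied pointwise at $x = \lambda D_k$, combined with $\E[D_k \mid X_1,\dots,X_{k-1}] = 0$, yields
\begin{equation*}
    \E\bigl[e^{\lambda D_k} \,\big|\, X_1,\dots,X_{k-1}\bigr] \;\le\; 1 + \lambda^2 \psi(\lambda R) \, \E\bigl[D_k^2 \,\big|\, X_1,\dots,X_{k-1}\bigr] \;\le\; \exp\!\Bigl(\lambda^2 \psi(\lambda R) \, \E\bigl[D_k^2 \,\big|\, X_1,\dots,X_{k-1}\bigr]\Bigr),
\end{equation*}
where the final step uses $1 + x \le e^x$. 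Multiplying telescopically over $k$, this shows that the process $Z_k := \exp\!\bigl(\lambda (Y_k - Y_0) - \lambda^2 \psi(\lambda R)\, V_k\bigr)$ is a nonnegative supermartingale with $\E[Z_N] \le Z_0 = 1$. Since $V_N = W_N \le \sigma^2$ always, we obtain $\E\bigl[e^{\lambda(Y_N - Y_0)}\bigr] \le \exp\!\bigl(\lambda^2 \psi(\lambda R)\, \sigma^2\bigr)$, and Markov's inequality then gives $\Pr\bigl[Y_N - Y_0 \ge \mu\bigr] \le \exp\!\bigl(-\lambda \mu + \lambda^2 \psi(\lambda R)\, \sigma^2\bigr)$ for every $\lambda > 0$.

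To pass to the Bernstein form, I would use the classical scalar inequality $\psi(u) \le 1/\bigl(2 - 2u/3\bigr)$ valid for $u \in (0,3)$, so that $\lambda^2 \psi(\lambda R) \le \lambda^2/\bigl(2 - 2\lambda R/3\bigr)$, and then optimize by taking $\lambda = \mu / \bigl(\sigma^2 + R\mu/3\bigr)$, which lies in the admissible range and reduces the exponent to exactly $-\mu^2/2 / (\sigma^2 + R\mu/3)$. Applying the identical argument to the martingale $(-Y_k)$ controls the lower tail, and a union bound over the two directions introduces the factor of $2$ in front of the exponential.

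The main obstacle I expect is the scalar inequality $\psi(u) \le 1/(2 - 2u/3)$ and the bookkeeping around the parameter range $\lambda R < 3$: one must verify that the optimizing choice $\lambda = \mu / (\sigma^2 + R\mu/3)$ indeed satisfies $\lambda R < 3$, which follows from $R\mu/(\sigma^2 + R\mu/3) < 3$. The rest of the argument is purely mechanical once the exponential supermartingale $Z_k$ is in hand; the subtlety lies in isolating the correct second-order term that exchanges the crude Bennett-style bound for the clean rational expression appearing in the theorem's denominator.
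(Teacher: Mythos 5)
The paper does not prove \Cref{thm:freedman}; it is taken verbatim (in a slightly simplified form) from the cited literature (Freedman~1975; Tropp~2011, Theorem~1.1), with an explicit remark that the original statement is more general. So there is no internal proof to compare yours against. That said, your derivation is correct and is essentially the argument Tropp gives: form the exponential supermartingale $Z_k = \exp\bigl(\lambda(Y_k - Y_0) - \lambda^2\psi(\lambda R)\,V_k\bigr)$ via the one-step MGF bound using the monotonicity of $\psi(u) = (e^u - 1 - u)/u^2$, note $\E[Z_N]\le 1$, peel off the quadratic variation using $V_N\le\sigma^2$, apply Markov, and then pass from the Bennett-type exponent to the Bernstein-type one via $\psi(u)\le \tfrac{1}{2}(1-u/3)^{-1}$ for $u\in(0,3)$ together with the choice $\lambda = \mu/(\sigma^2 + R\mu/3)$; the two-sided bound then comes from repeating the argument for $-Y_k$ and a union bound. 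Two small points worth noting for rigor: the one-step MGF step only needs the one-sided bound $D_k \le R$ (for $\lambda>0$), which follows from $|D_k|\le R$ and is needed again for $-D_k\le R$ in the lower-tail pass; and the admissibility check $\lambda R < 3$ is automatic since $R\mu/(\sigma^2 + R\mu/3) < 3$ whenever $\sigma^2 > 0$, while the degenerate case $\sigma^2 = 0$ forces $Y_N = Y_0$ almost surely and the bound is trivial. Your proposal is a complete and correct proof along the standard lines.
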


We remark that we simplified Freedman's inequality to our needs, and the original form of Freedman's inequality is more general.

\section{Our Upper Bounds} \label{sec:Delta_in_upper}

This section proves \Cref{thm:main}.
To this end, we prove the following detailed version of this result.

\begin{theorem} \label{thm:final}
	With appropriate parameter settings, \roundingpush (\Cref{alg:roundingpush}) outputs an estimate $\epi(t)$ in expected
	\[
		\tO\left(n^{1/2}\min\left\{ \Deltain^{1/2} \big/ n^{\smallexpo}, \ \Deltaout^{1/2}\big / n^{\smallexpo}, \ m^{1/4} \right\}\right)
	\]
	time, where $\smallexpo$ is defined in \Cref{eqn:smallexpo}, while satisfying the following error guarantee:
	\begin{align*}
		\Pr\left\{\big|\epi(t)-\vpi(t)\big| \ge \frac{1}{2} \vpi(t) \right\} \le \frac{1}{10}.
	\end{align*}
\end{theorem}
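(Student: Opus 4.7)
The plan is to analyze \roundingpush, which combines the invariant from \Cref{lem:invariant_backward} with a forward Monte Carlo estimator together with a novel randomized rounding of backward residues driven by rough Monte Carlo estimates of PageRank scores. First, I would run a coarse Monte Carlo preprocessing to obtain estimates $\tpi(v)$ of $\vpi(v)$ for every node, accurate to within a constant factor whenever $\vpi(v)$ exceeds a chosen threshold $\vpi^\ast$. Then I would execute a modified backward push from $t$ for $\iast = O(\log n)$ levels of depth: at each level, instead of propagating $r(v)$ to every in-neighbor $u$ as in \bpush, I would push deterministically only when both $\tpi(u)$ and the would-be residue are above carefully chosen thresholds, and otherwise apply an unbiased randomized rounding that either zeroes the residue or inflates it to the threshold with the appropriate probabilities. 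Finally, I would estimate $\sum_v \vpi(v) r(v)$ with $N$ independent $\alpha$-discounted random walks from uniformly random sources, outputting $\epi(t) = \sum_v p(v)/n + (1/N)\sum_{i=1}^{N} r(X_i)$, where $X_i$ is the terminating node of the $i$-th walk.

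Correctness will hinge on a martingale argument. Ordering the randomized-rounding coin flips performed during the backward push together with the $N$ Monte Carlo samples into a single sequence $X_1,\ldots,X_N$, I would define $Y_0,Y_1,\ldots,Y_N$ with $Y_0=\vpi(t)$ and $Y_N=\epi(t)$, where each $Y_k$ is the conditional expectation of the final estimator given the randomness revealed through step $k$. The invariant of \Cref{lem:invariant_backward} guarantees that every rounding operation preserves $\sum_v p(v)/n + \sum_v \vpi(v) r(v) = \vpi(t)$ in expectation, so each increment $Y_k - Y_{k-1}$ has zero conditional mean. To apply Freedman's inequality (\Cref{thm:freedman}) with $\mu = \vpi(t)/2$, I will bound the per-step range $R$, controlled by the rounding threshold and by the maximum residue that a Monte Carlo walk can hit, and the predictable quadratic variation $W_N$, which accumulates contributions from both the rounding decisions and the random-walk draws.

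The main obstacle, and the place where the exponent $\smallexpo$ enters, is the level-by-level variance analysis of the rounding martingale. I would track how residue mass spreads backward from $t$: at depth $i$ the total residue mass is at most $(1-\alpha)^i$, yet it may be distributed across as many as $\Deltain^i$ nodes, each receiving a very small residue whose rounding variance scales roughly as $\vpi^\ast \cdot r(u) \cdot \vpi(u)^2$. Summing across nodes and levels yields geometric series in $(1-\alpha)\Deltain$ and in $(1-\alpha)^2 \Deltain^2$; these diverge precisely when $(1-\alpha)\Deltain > 1$, and controlling them forces the preprocessing threshold $\vpi^\ast$ (and, through it, the Monte Carlo sample budget) to shrink by a factor of $n^{\smallexpo}$ with $\smallexpo$ taking the piecewise form in \Cref{eqn:smallexpo}. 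When $(1-\alpha)\Deltain \le 1$ the series collapses to a constant and the extreme value $\smallexpo = 1/2$ yields the $\tO(1)$ regime announced in the introduction.

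Once $R$ and $W_N \le \sigma^2$ are in hand, Freedman's inequality directly yields $\Pr\{|\epi(t) - \vpi(t)| \ge \vpi(t)/2\} \le 1/10$ for an appropriate choice of $N$. The expected running time decomposes into three parts: the Monte Carlo preprocessing ($\tO(1/\vpi^\ast)$), the selective backward push (whose expected push count is bounded by $\iast$ times the expected number of in-neighbors of visited nodes whose estimated PageRank survives the threshold), and the final forward Monte Carlo ($\tO(N)$). Jointly optimizing $\vpi^\ast$, $N$, and $\iast$ subject to the variance budget from the previous paragraph yields the stated $\tO\bigl(n^{1/2}\min\{\Deltain^{1/2}/n^{\smallexpo},\ \Deltaout^{1/2}/n^{\smallexpo},\ m^{1/4}\}\bigr)$ bound; the $\Deltaout^{1/2}$ and $m^{1/4}$ alternatives arise, as in prior work, by letting the forward side absorb more of the workload on graphs with small out-degree or small average degree.
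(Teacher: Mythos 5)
Your high‑level slogan is right---Monte Carlo scores gate a randomized, rounded backward push whose error you control via Freedman's inequality---but several of the load‑bearing choices diverge from the paper in ways that would break the complexity bound.

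\textbf{The preprocessing Monte Carlo is the wrong one.} You propose plain Monte Carlo to estimate $\vpi(v)$ to constant relative error above a threshold $\vpi^\ast$, which requires $\tO(1/\vpi^\ast)$ work. With the paper's threshold $\eps=\tTheta\left(n^{-1/2-\smallexpo}M^{-1/2}\right)$ (where $M=\min\{\Deltain,\Deltaout,\sqrt m\}$), this costs $\tO\left(n^{1/2+\smallexpo}M^{1/2}\right)$, which overshoots the claimed budget $\tO\left(n^{1/2-\smallexpo}M^{1/2}\right)$ by a polynomial factor $n^{2\smallexpo}$. The paper's crucial trick is a \emph{modified} Monte Carlo (\Cref{alg:MC}) that estimates not $\vpi(v)$ but the tail quantity $\vpilarge(v)=\sum_{i\ge\iprime}\vpi_i(v)$, by continuing each terminated walk for $\iprime$ more steps and scaling by $(1-\alpha)^{\iprime}$. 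This caps each sample's contribution at $(1-\alpha)^{\iprime}$ and cuts the required number of walks to $n_r=\tO\big((1-\alpha)^{\iprime}/\eps\big)$, which hits exactly the budget. The choice of $\iprime$ is then tuned so that $\vpi(v)-\vpilarge(v)\le O(\eps)$ (using the $\Deltain^i(1-\alpha)^i/n$ bound on $\vpi_i(v)$), making $\vpilarge$ a usable surrogate for $\vpi$; this is where $\big((1-\alpha)\Deltain\big)^{\iast}$ actually enters, not in the variance of the rounding martingale as your sketch suggests.

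\textbf{The architecture differs.} The paper never pushes nodes in $\seteps$ at all: their residues stay put, and the final estimate directly substitutes $\tpi(v)$ for the exact $\vpi(v)$ in the invariant, i.e.\ $\epi(t)=\sum_{i}\sum_v\big(\ep_i(v)/n+\tpi(v)\er_i(v)\big)$, with no terminal forward Monte Carlo phase. At the end of the push, the only nonzero residues at levels $<L$ belong to $\seteps$, so $|\epi(t)-Y^{(L)}|$ is small precisely because $\tpi(v)$ is accurate there (\Cref{lem:Y_L_error}). You instead propose keeping a residual on low-PageRank nodes and running $N$ fresh random walks to estimate $\sum_v\vpi(v)r(v)$, a BiPPR-like third phase. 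That phase needs $N=\tTheta(\rmax/\vpi(t))=\tTheta(1/(L\eps))=\tO(n^{1/2+\smallexpo}M^{1/2})$ walks, again blowing the budget by $n^{2\smallexpo}$. Also, your gating rule (``push deterministically only when $\tpi(u)$ is \emph{large}'') is inverted relative to the paper, which rounds/pushes precisely the \emph{low}-PageRank nodes and leaves high-PageRank nodes alone; that inversion matters because the martingale step size and predictable quadratic variation bounds (\Cref{lem:difference_bound,lem:predictable_bound}) both crucially use $\vpi(v)\le 3\eps$ on the rounded nodes, which fails if you round high-PageRank nodes.

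\textbf{The variance bookkeeping is also not as you describe.} The paper does not sum a geometric series in $(1-\alpha)\Deltain$ across levels; instead it passes to a stopped martingale $\{Z\}$ (stopping when $|Y-\vpi(t)|\ge\vpi(t)/10$) and bounds the predictable quadratic variation by $3\eps\rmax\sum_i Y^{(i)}\le 4L\eps\rmax\vpi(t)$, using the stopping rule to cap each $Y^{(i)}$ by $(11/10)\vpi(t)$. Your formula ``$\vpi^\ast\cdot r(u)\cdot\vpi(u)^2$'' for the per-node rounding variance has an extra factor of $\vpi(u)$; the correct conditional variance contribution is $\vpi(v)^2\cdot\rmax\,\er_i^-(v)\le 3\eps\rmax\,\vpi(v)\er_i^-(v)$. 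Without the modified Monte Carlo, the stopped-martingale trick, and the push-only-on-$\setepsminus$ design, the parameter optimization you gesture at in the last paragraph would not close to the claimed $\tO\left(n^{1/2}\min\{\Deltain^{1/2}/n^{\smallexpo},\Deltaout^{1/2}/n^{\smallexpo},m^{1/4}\}\right)$ bound.
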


The appropriate parameter settings in the theorem statement above will be given after we introduce the framework of our algorithm.
Before delving into details, we first present a high-level overview of our algorithm.

\subsection{Technical Overview}

Our algorithm consists of two phases: Monte Carlo sampling and randomized push.
In the first phase we do Monte Carlo experiments, but instead of estimating $\vpi(t)$ accurately, we only attempt to find the nodes with large PageRank centralities.
These large PageRank centralities can be estimated accurately with a small number of samples.

In the next randomized push phase, we do $\push$ operations with randomly rounded residues, but only for the nodes whose PageRank values could not be estimated accurately in the first phase.
For these nodes, before each $\push$ operation, if their residue is below a certain threshold $\rmax$, we randomly round the residue to either $0$ or $\rmax$.
Notably, the randomized rounding operations still maintain $\push$'s invariant in expectation, which facilitates a martingale-based analysis for the concentration properties of our estimate.
By restricting this rounding operation to nodes with small PageRank values, we can bound the difference and the predictable quadratic variation of the martingale, which enables us to apply Freedman's inequality to establish concentration guarantees for our estimate.

Also, this framework allows us the flexibility to introduce a special kind of Monte Carlo experiments tuned for the situation where $\Deltain$ is limited.
Specifically, in our Monte Carlo experiments, we simulate $\alpha$-discounted random walks until termination, and then continue each walk for a fixed number of additional steps.
This sampling scheme reduces the number of random walks needed to achieve desirable estimates of a variant of the PageRank value, which is close to the original PageRank value when $\Deltain$ is small and is thus sufficient for our purposes.

An algorithmic challenge lies in determining an appropriate stopping criterion for our randomized push phase.
Our remedy is to partition the push process into multiple rounds and argue that we can safely discard the remaining residues after $O(\log n)$ rounds.
We justify this approach by relating the expectations of the residues in our randomized push process to the residues in a deterministic push process, which decrease exponentially in the number of rounds performed.

We emphasize that although our algorithm is also a combination of Monte Carlo sampling and local exploration, its methodology is different from prior bidirectional methods~\cite{lofgren2014fast,lofgren2016personalized, bressan2018sublinear,bressan2023sublinear, wang2024revisiting}.
Specifically, all these prior methods use local exploration solely to construct a low-variance estimator of $\vpi(t)$, and then generate Monte Carlo samplings of this estimator to yield the final result.
In contrast, our algorithm reverses the roles of local exploration and Monte Carlo sampling: our randomized push process produces the final estimate, while the preceding Monte Carlo phase serves to guarantee the concentration property of the push process while minimizing its computational cost.

Given the foundational role of the push process in estimating PageRank and other types of random-walk probability, we believe that our $\roundingpush$ technique is of independent interest to related research.

\subsection{Technical Details}

Our new algorithm \roundingpush consists of a modified Monte Carlo algorithm and a randomized push process.
To introduce the modified Monte Carlo algorithm, we define $\vpilarge(v)=\sum_{i=\iprime}^{\infty}\vpi_i(v)$ for every $v\in V$, where $\iprime$ is a nonnegative integer to be specified later.
The modified Monte Carlo algorithm aims to estimate $\vpilarge(v)$ instead of $\vpi(v)$ for every $v\in V$.
To achieve this, the algorithm first simulates standard $\alpha$-discounted random walks until termination, and then continues each walk for exactly $\iprime$ additional steps.
Also, it multiplies a factor of $(1-\alpha)^{\iprime}$ to the estimate of each sampled node.
\Cref{alg:MC} gives a pseudocode for the modified Monte Carlo algorithm, where it takes another parameter $n_r$ as input and generates $n_r$ samplings.
The resultant Monte Carlo estimates are denoted as $\tpi(v)$ for every $v \in V$.

\begin{algorithm}[ht] \label{alg:MC}
	\DontPrintSemicolon
	\caption{$\MonteCarlo(n_r, \iprime)$}
	\KwIn{number of walks $n_r$, length $\iprime$}
	\KwOut{dictionary $\tpi()$ for Monte Carlo estimates}
	$\tpi() \gets$ empty dictionary with default value $0$ \;
	\For{$k$ \textup{from} $1$ \textup{to} $n_r$}{
	$v \gets \jump()$ \;
	\While{\textup{true}}{
		with probability $\alpha$ \textbf{break}\;
		$v \gets \child\Big(v,\randint\big(\outdeg(v)\big)\Big)$ \;
		\textcolor{gray}{// $\randint\big(\outdeg(v)\big)$ randomly returns an integer in $\big\{1,2,\dots,\outdeg(v)\big\}$} \;
	}
	\For{$i$ \textup{from} $1$ \textup{to} $\iprime$}{
		$v \gets \child\Big(v,\randint\big(\outdeg(v)\big)\Big)$ \;
	}
	$\tpi(v) \gets \tpi(v)+(1-\alpha)^{\iprime}/n_r$ \;
	}
	\Return $\tpi( )$ \;
\end{algorithm}

In \roundingpush, we will set $\seteps = \big\{v \in V \mid \tpi(v) \ge \eps\big\}$ and $\setepsminus = V \setminus \seteps$ based on the results of $\MonteCarlo(n_r,\iprime)$, for a specified parameter $\eps$.
Let $\event$ be the event that the results of the Monte Carlo algorithm in \Cref{alg:MC} satisfy all the following conditions:
\begin{align}
	\big|\tpi(v)-\vpilarge(v)\big| \le \frac{1}{20} \vpilarge(v) \quad & \text{for all } v \in \seteps, \label{eqn:E_1} \\
	\vpilarge(v) \ge \frac{2}{3} \eps \quad & \text{for all } v \in \seteps, \label{eqn:E_2} \\
	\vpilarge(v) \le 2 \eps \quad & \text{for all } v \in \setepsminus. \label{eqn:E_3}
\end{align}
We have the following lemma that lower bounds the probability of event $\event$.

\begin{lemma} \label{lem:mc_bound}
	If $n_r = \left\lceil\frac{3200(1-\alpha)^{\iprime}\ln(40n)}{\eps}\right\rceil$ in \Cref{alg:MC}, then we have $\Pr\{\event\} \ge \frac{19}{20}$.
\end{lemma}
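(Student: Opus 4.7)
My plan is to identify each $\tpi(v)$ as a scaled binomial random variable with mean $\vpilarge(v)$, then apply multiplicative Chernoff (or a stochastic-dominance variant) together with a union bound over $v \in V$.

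First, I would verify that $\E[\tpi(v)] = \vpilarge(v)$. Tracing through \Cref{alg:MC}, the walk makes $L$ non-terminating moves with probability $(1-\alpha)^{L}\alpha$, followed by $\iprime$ extra moves, so the returned node equals $v$ with probability
\[
\sum_{L \ge 0}(1-\alpha)^{L}\alpha \cdot \frac{\vpi_{L+\iprime}(v)}{(1-\alpha)^{L+\iprime}\alpha} = \frac{\vpilarge(v)}{(1-\alpha)^{\iprime}},
\]
where I used the identity $\vpi_i(v) = (1-\alpha)^i \alpha \cdot p_i(v)$, with $p_i(v)$ the pure-random-walk distribution from a uniform source at step $i$. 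Hence the count $S_v := n_r\tpi(v)/(1-\alpha)^{\iprime}$ is $\mathrm{Binomial}\bigl(n_r, \vpilarge(v)/(1-\alpha)^{\iprime}\bigr)$, and $\E[\tpi(v)] = \vpilarge(v)$.

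Second, I would reduce $\event$ to the conjunction of two cleaner events: (I) $|\tpi(v) - \vpilarge(v)| \le \vpilarge(v)/20$ for every $v$ with $\vpilarge(v) \ge 2\eps/3$; and (II) $\tpi(v) < \eps$ for every $v$ with $\vpilarge(v) < 2\eps/3$. Then \eqref{eqn:E_2} is exactly the contrapositive of (II); \eqref{eqn:E_1} follows from (I) once (II) has forced $\seteps \subseteq \{v : \vpilarge(v) \ge 2\eps/3\}$; and \eqref{eqn:E_3} follows because (I) gives $\tpi(v) \ge 19\vpilarge(v)/20 > \eps$ whenever $\vpilarge(v) > 2\eps$, so no such $v$ can end up in $\setepsminus$.

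Finally, I would bound the two failure probabilities using binomial tail inequalities. For (I), multiplicative Chernoff with relative deviation $1/20$ gives per-node failure $2\exp(-\E[S_v]/1200)$, and the hypothesis $\vpilarge(v) \ge 2\eps/3$ together with the prescribed $n_r$ forces $\E[S_v] \ge 6400\ln(40n)/3$, pushing the per-node failure well below $1/(40n)$. For (II), I would stochastically dominate $S_v$ by $\mathrm{Bin}\bigl(n_r, 2\eps/(3(1-\alpha)^{\iprime})\bigr)$ and apply multiplicative Chernoff with $\delta = 1/2$, since the threshold $\eps n_r/(1-\alpha)^{\iprime}$ equals $3/2$ times the dominating mean; this also yields per-node failure far below $1/(40n)$. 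A union bound over $v \in V$ in each case gives total failure at most $1/20$. The main obstacle is really the bookkeeping: picking the intermediate cutoff $2\eps/3$ so that all three conditions of $\event$ follow from (I) and (II) alone, and verifying that the constant $3200$ in $n_r$ leaves enough slack in the Chernoff exponents to absorb the union bound over $n$ nodes in both cases simultaneously.
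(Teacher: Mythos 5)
Your proposal is correct and follows essentially the same route as the paper's proof: verify $\E[\tpi(v)] = \vpilarge(v)$, split the nodes into two regimes (large vs.\ small $\vpilarge(v)$), apply a Chernoff-type tail bound plus a union bound over all $n$ nodes, and then check that the resulting per-node deviations imply all three conditions defining $\event$. The differences are purely cosmetic: the paper splits at $\vpilarge(v) \gtrless \frac{1}{2}\eps$ and works with a two-sided additive bound in the small regime using a Bernstein-style Chernoff inequality for bounded random variables (stated as \Cref{thm:chernoff} in the appendix), whereas you split at $\frac{2}{3}\eps$ and phrase the small-regime event as the one-sided condition $\tpi(v) < \eps$, bounded via stochastic dominance and multiplicative Chernoff — both sets of intermediate conditions imply $\event$ by the same short case analysis.
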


The proof of this lemma is a standard application of the Chernoff and union bounds once one verifies that $\E\big[\tpi(v)\big]=\vpilarge(v)$ for all $v\in V$.
We defer the proof to \Cref{sec:deferred_proofs}.

Now we introduce our main algorithm \roundingpush, whose pseudocode is given in \Cref{alg:roundingpush}.
It takes several parameters as input and first runs the Monte Carlo phase by invoking $\MonteCarlo(n_r,\iprime)$ and determining $\seteps$ and $\setepsminus$ based on its results.
Next, the algorithm executes the randomized push phase in $L$ rounds by iterating $i$ from $0$ to $L-1$.
To do so, it uses $L+1$ reserve and residue dictionaries $\ep_i()$ and $\er_i()$ for $i \in [0,L]$, and initially only $\er_0(t) = 1$ is nonzero.
In each round, the algorithm examines all nodes in $\setepsminus$ with nonzero residues (instead of above the threshold $\rmax$), and performs a randomized rounding operation followed by a deterministic \push operation.
In the rounding operation, if the residue of the node $v$ in question is less than $\rmax$, its residue is set to $\rmax$ with probability $\er_i(v)/\rmax$, and to $0$ otherwise.
Subsequently, if $\er_i(v)>0$ (in which case it must be at least $\rmax$), the algorithm performs the typical \push operation on $v$.
The only difference is that in the propagation step, we update $\er_{i+1}(u)$ for each in-neighbor $u$ of $v$, instead of $\er_{i}(u)$.
We remark that the particular order in which the rounding and \push operations are performed across nodes within each level does not affect the outcome of the algorithm.
After the randomized push process, \roundingpush outputs the estimate $\epi(t)$ as
\[
	\epi(t) = \sum_{i=0}^{L-1}\sum_{v\in V}\big(\ep_i(v)/n+\tpi(v)\er_i(v)\big).
\]
Note that we do not include the $\er_L(v)$'s in the estimate $\epi(t)$.

\begin{algorithm}[ht] \label{alg:roundingpush}
	\DontPrintSemicolon
	\caption{$\roundingpush(t, \iprime, \eps, n_r, L, \rmax)$}
	\KwIn{target node $t$, length $\iprime$, threshold $\eps$, number of random walks $n_r$, length $L$, threshold $\rmax$}
	\KwOut{$\epi(t)$ as an estimate of $\vpi(t)$}
	$\tpi() \gets \MonteCarlo(n_r, \iprime)$ \;
	$\seteps \gets \big\{v \in V \mid \tpi(v) \ge \eps\big\}$, and $\setepsminus \gets V \setminus \seteps$ \;
	$\ep_i(),\er_i() \gets$ empty dictionaries with default value $0$ for all $i \in [0,L]$ \;
	$\er_0(t)\gets 1$ \;
	\For{$i$ \textup{from} $0$ \textup{to} $L-1$}{
		\textcolor{gray}{// the $i$-th level process} \;
		\For{\textup{each node} $v \in \setepsminus $ \textup{with} $\er_i(v)>0$}{
			\If{$\er_i(v)<\rmax$}{
				\textcolor{gray}{// rounding operation} \;
				$\er_i(v) \gets \begin{cases}
				\rmax, & \text{w.p. } \er_i(v)/\rmax, \\
				0, & \text{otherwise} \end{cases}$ \;
			}
			\If{$\er_i(v)>0$}{
				\textcolor{gray}{// $\push$ operation} \;
				$\ep_i(v) \gets \alpha \er_i(v)$ \;
				\For{\textup{each} $u\in \Nin(v)$}{
					$\er_{i+1}(u)\gets \er_{i+1}(u)+(1-\alpha)\er_i(v)/\outdeg(u)$ \;
				}
				$\er_i(v)\gets 0$\;
			}
		}
	}
	\Return $\epi(t)=\sum_{i=0}^{L-1}\sum_{v\in V}\big(\ep_i(v)/n+\tpi(v)\er_i(v)\big)$ \;
\end{algorithm}

Throughout the analysis below, we consider the following parameter settings:
\[
	\left\{
	\begin{aligned}
		\iast & =
		\begin{cases}
			\log_{(1-\alpha)\Deltain^2}\big(n/\min\left\{\Deltain,\Deltaout,\sqrt{m}\right\}\big), & \text{if } (1-\alpha)\Deltain > 1, \\
			\log_{1-\alpha}(1/n), & \text{otherwise},
		\end{cases} \\
		\iprime & = \lfloor \iast \rfloor, \\
		\eps & = \frac{30\alpha}{n} \cdot (\iprime+1) \cdot \max\left\{\big((1-\alpha)\Deltain\big)^{\iast}, 1\right\}, \\
		n_r & = \left\lceil\frac{3200(1-\alpha)^{\iprime}\ln(40n)}{\eps}\right\rceil, \\
		L & = \left\lceil\log_{1-\alpha}\frac{\alpha}{400n}\right\rceil+1, \\
		\rmax & = \frac{\vpi(t)}{5000L \eps}.
	\end{aligned}
	\right.
\]
Of course, we cannot directly set $\rmax$ as described above since $\vpi(t)$ is unknown.
However, for the time being we will assume that $\rmax$ is set to this ideal value.
Later at the end of this section, we will demonstrate how an adaptive approach to setting $\rmax$ achieves the same computational complexity as this ideal setting.
Additionally, if we keep track of the summation $\sum_{i=0}^{L-1}\sum_{v\in V}\big(\ep_i(v)/n+\tpi(v)\er_i(v)\big)$ throughout the execution of \roundingpush, we can readily simplify the algorithm by only maintaining two levels of residues.
However, for ease of description and analysis, we present and analyze the current version of \roundingpush.

Based on the settings of $\iast$, $\iprime$, and $\eps$, the following lemma gives some extra properties provided that event $\event$ happens.

\begin{lemma} \label{lem:E_bound}
	If event $\event$ holds, then $\vpi(v)-\vpilarge(v) \le \frac{1}{20} \vpilarge(v)$ for all $v \in \seteps$ and $\vpi(v) \le 3\eps$ for all $v \in \setepsminus$.
\end{lemma}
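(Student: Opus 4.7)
The plan is to directly control the difference $\vpi(v) - \vpilarge(v) = \sum_{i=0}^{\iprime-1} \vpi_i(v)$ by a deterministic bound depending only on the graph parameters, and then combine this with the high-probability guarantees of event $\event$ to obtain the two claims.

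First I would establish, by induction on $i$, the pointwise bound $\vpi_i(v) \le \alpha\big((1-\alpha)\Deltain\big)^i / n$ for every $v \in V$ and every $i \ge 0$. The base case $i=0$ is immediate since $\vpi_0(v) = \alpha/n$ (the walk takes zero steps and terminates at its starting node, chosen uniformly). For the inductive step, I would use the recursion
\[
  \vpi_i(v) = (1-\alpha) \sum_{u \in \Nin(v)} \frac{\vpi_{i-1}(u)}{\dout(u)},
\]
which follows from \Cref{eqn:iterative_pagerank} specialized to the $i$-hop quantities, and bound $|\Nin(v)| \le \Deltain$ while using $\dout(u) \ge 1$.

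Next I would sum this geometric-type estimate from $i = 0$ to $i = \iprime-1$, splitting by whether $(1-\alpha)\Deltain \ge 1$ or not. In the first case the dominant term is $((1-\alpha)\Deltain)^{\iprime-1} \le ((1-\alpha)\Deltain)^{\iast}$, so the sum is at most $\iprime \cdot ((1-\alpha)\Deltain)^{\iast}$; in the second case each term is bounded by $1$, so the sum is at most $\iprime$. Either way,
\[
  \vpi(v) - \vpilarge(v) = \sum_{i=0}^{\iprime - 1}\vpi_i(v) \le \frac{\alpha}{n} \cdot (\iprime + 1) \cdot \max\!\left\{\big((1-\alpha)\Deltain\big)^{\iast},\, 1\right\} = \frac{\eps}{30}
\]
by the chosen setting of $\eps$.

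Finally I would combine this deterministic bound with the conditions defining $\event$. For $v \in \seteps$, \Cref{eqn:E_2} gives $\vpilarge(v) \ge \tfrac{2}{3}\eps$, hence $\tfrac{1}{20}\vpilarge(v) \ge \tfrac{\eps}{30} \ge \vpi(v) - \vpilarge(v)$, establishing the first claim. For $v \in \setepsminus$, \Cref{eqn:E_3} gives $\vpilarge(v) \le 2\eps$, so $\vpi(v) \le \vpilarge(v) + \tfrac{\eps}{30} \le 2\eps + \tfrac{\eps}{30} \le 3\eps$, establishing the second. I do not expect a real obstacle: the only subtle point is making sure the inductive bound $\vpi_i(v) \le \alpha((1-\alpha)\Deltain)^i / n$ is stated uniformly in $v$ so that the induction closes cleanly, and that the calibration of $\eps$ absorbs both the $\iprime$ factor and the geometric growth or decay in the exponent.
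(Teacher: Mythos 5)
Your proof is correct and follows essentially the same route as the paper: you bound $\vpi(v)-\vpilarge(v) = \sum_{i=0}^{\iprime-1}\vpi_i(v)$ via the pointwise estimate $\vpi_i(v) \le \frac{\alpha}{n}\big((1-\alpha)\Deltain\big)^i$, sum to obtain $\frac{\eps}{30}$ under the chosen $\eps$, and then combine with \Cref{eqn:E_2,eqn:E_3}. The only cosmetic difference is that you establish the pointwise bound by induction on the $i$-hop recursion, whereas the paper derives the identical estimate directly by counting the at most $\Deltain^i$ length-$i$ paths into $v$, each contributing at most $\frac{\alpha}{n}(1-\alpha)^i$.
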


\begin{proof}
Recall that $\vpilarge(v) = \sum_{i=\iprime}^{\infty}\vpi_i(v)$ and $\vpi_i(v)$ is the probability that an $\alpha$-discounted random walk starting from a uniformly random node in $V$ terminates at $v$ in exactly $i$ steps.
The number of paths of length $i$ ending at $t$ is at most $\Deltain^i$, and for each of these paths, the probability that the traversal of an $\alpha$-discounted random walk in question exactly equals the path is at most $\frac{\alpha}{n}(1-\alpha)^i$.
Thus, we have
\begin{align*}
	\vpi(v)-\vpilarge(v) & \le \sum_{i=0}^{\iprime-1}\frac{\alpha (1-\alpha)^i \Deltain^i}{n} \le \frac{\alpha}{n} \cdot \iprime \cdot \max\left\{\big((1-\alpha)\Deltain\big)^{\iprime}, 1\right\} \\
	& \le \frac{\alpha}{n} \cdot \iprime \cdot \max\left\{\big((1-\alpha)\Deltain\big)^{\iast}, 1\right\} \le \frac{1}{30}\eps,
\end{align*}
since we set $\eps = \frac{30\alpha}{n} \cdot (\iprime+1) \cdot \max\left\{\big((1-\alpha)\Deltain\big)^{\iast}, 1\right\}$.
For any $v \in \seteps$, by \Cref{eqn:E_2} we have $\vpilarge(v) \ge \frac{2}{3} \eps$, so we further have
\begin{align*}
	\vpi(v)-\vpilarge(v) \le \frac{1}{30} \eps \le \frac{1}{30} \cdot \frac{3}{2} \vpilarge(v) \le \frac{1}{20} \vpilarge(v).
\end{align*}
On the other hand, for any $v \in \setepsminus$, by \Cref{eqn:E_3}, we have $\vpilarge(v) \le 2\eps$ and thus $\vpi(v) \le 2\eps + \frac{1}{30}\eps \le 3 \eps$.
This completes the proof.
\end{proof}

Next, to analyze the behavior of \roundingpush, we need to define some intermediate quantities that arise during its execution.
Here and after, for ease of description, we conceptually view that the algorithm performs rounding and $\push$ operations to each node $v \in V$ at each level $i \in [0, L-1]$ in some particular order, but these operations are null if $v \in \seteps$ or $\er_i(v)=0$.
For each $i \in [0, L-1]$ and $v \in V$, we use $\er^{-}_i(v)$ to denote the value of $\er_i(v)$ immediately before the rounding operation for node $v$ at level $i$, and $\er^{+}_i(v)$ to denote the value of $\er_i(v)$ immediately after that rounding operation (and before its subsequent possible $\push$ operation).

We shall also relate the expected values of the $\er^{-}_i(v)$'s to the corresponding values in a deterministic push process.
To this end, we give the pseudocode of the deterministic push algorithm without the $\rmax$ threshold in \Cref{alg:pushNoThreshold}.
It simply performs $L$ rounds of \push operations to all nodes with nonzero residues, which can also be viewed as a backwards version of Power Iteration (see, e.g., \cite{yang2024efficient}).
Similarly, for each $i \in [0,L-1]$ and $v \in V$, we define $\r^{-}_i(v)$ to be the value of $\r_i(v)$ immediately before the \push operation for node $v$ at level $i$, and we define $\r^{-}_{L}(v)$ to be the value of $\r_L(v)$ after the algorithm terminates.

\begin{algorithm}[ht] \label{alg:pushNoThreshold}
	\DontPrintSemicolon
	\caption{$\pushNoThreshold(t, L)$}
	\KwIn{target node $t$, length $L$}
	$\p_i(),\r_i()\gets$ empty dictionaries with default value $0$ for all $i \in [0,L]$ \;
	$\r_0(t)\gets 1$ \;
	\For{$i$ \textup{from} $0$ \textup{to} $L-1$}{
		\For{\textup{each node} $v \in V$ \textup{with} $\r_i(v)>0$}{
			$\p_i(v) \gets \alpha \r_i(v)$ \;
			\For{\textup{each} $u\in \Nin(v)$}{
				$\r_{i+1}(u) \gets \r_{i+1}(u)+(1-\alpha)\r_i(v)/\outdeg(u)$ \;
			}
			$\r_i(v)\gets 0$ \;
		}
	}
\end{algorithm}

\begin{lemma} \label{lem:exp}
	For each $i \in [0, L-1]$ and $v \in V$,
	\begin{align*}
		\E\left[\er^{+}_i(v) \,\middle|\, \er^{-}_i(v)\right] = \er^{-}_i(v) \quad \text{and} \quad \E\left[\er^{+}_i(v)\right] = \E\left[\er^{-}_i(v)\right] \le \r^{-}_{i}(v).
	\end{align*}
\end{lemma}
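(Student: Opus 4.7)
My plan is to separate the two claims: the conditional unbiasedness of the randomized rounding is essentially immediate from the algorithm's definition, while the comparison with the deterministic process should go by induction on the level index $i$.

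For the equality $\E\bigl[\er^{+}_i(v) \mid \er^{-}_i(v)\bigr] = \er^{-}_i(v)$, I would just unpack the rounding step of \roundingpush into cases. If $v \in \seteps$ the node is never processed so $\er^{+}_i(v)=\er^{-}_i(v)$; if $v \in \setepsminus$ but $\er^{-}_i(v) = 0$ or $\er^{-}_i(v) \ge \rmax$, the rounding step is skipped for the same trivial reason; and in the only nontrivial case $0 < \er^{-}_i(v) < \rmax$ with $v \in \setepsminus$, the rounding returns $\rmax$ with probability $\er^{-}_i(v)/\rmax$ and $0$ otherwise, which averages to exactly $\er^{-}_i(v)$. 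Taking an outer expectation and applying the tower property immediately yields $\E[\er^{+}_i(v)] = \E[\er^{-}_i(v)]$.

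For the inequality $\E[\er^{-}_i(v)] \le \r^{-}_i(v)$, the plan is induction on $i$. The base case $i = 0$ is immediate since both quantities equal $\indicator{v=t}$. For the inductive step, I would write down the one-level recursion: since residues at level $i+1$ are created only by pushes from level $i$, and pushes at level $i$ happen only at nodes in $\setepsminus$ whose post-rounding residue is positive, we have
\[
\er^{-}_{i+1}(v) \;=\; \sum_{u \in \Nout(v)\,\cap\, \setepsminus} \frac{(1-\alpha)\,\er^{+}_i(u)}{\outdeg(v)},
\]
whereas \pushNoThreshold yields the same recursion but summed over all of $\Nout(v)$. Taking expectations, invoking the unbiasedness step above, and plugging in the inductive hypothesis gives
\[
\E\bigl[\er^{-}_{i+1}(v)\bigr] \;\le\; \sum_{u \in \Nout(v)\cap\setepsminus}\frac{(1-\alpha)\,\r^{-}_i(u)}{\outdeg(v)} \;\le\; \r^{-}_{i+1}(v),
\]
where the last inequality just drops the $\setepsminus$ restriction using nonnegativity of the $\r^{-}_i$'s.

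I do not expect any real obstacle here: the argument is an unbiasedness-plus-induction bookkeeping check. The one point worth stating carefully in the final write-up is the reason the claim is an inequality rather than an equality, namely that $\seteps$ nodes never push in \roundingpush but do push in \pushNoThreshold, so residue mass can get \emph{trapped} at $\seteps$ nodes in the randomized process and thereby under-reaches the deeper levels compared to the deterministic process; nonnegativity then lets us upper-bound the restricted sum by the unrestricted one cleanly.
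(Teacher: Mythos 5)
Your proposal is correct and follows essentially the same route as the paper: a case analysis of the rounding step for conditional unbiasedness, the tower property, and then induction on $i$ using the one-level push recursion restricted to $\Nout(v)\cap\setepsminus$ and dropping that restriction by nonnegativity. Your closing remark about residue mass getting ``trapped'' at $\seteps$ nodes is a nice intuition the paper leaves implicit, but it does not change the argument.
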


\begin{proof}
First, if $v\in \seteps$ or $\er^{-}_i(v) \notin (0,\rmax)$, the rounding operation takes no effect and thus $\er^{+}_i(v) = \er^{-}_i(v)$ and $\E\left[\er^{+}_i(v) \,\middle|\, \er^{-}_i(v)\right] = \er^{-}_i(v)$.
For $v\in \setepsminus$ and $\er^{-}_i(v) \in (0,\rmax)$, the rounding operation sets $\er^{+}_i(v)$ to $\rmax$ with probability $\er^{-}_i(v)/\rmax$ and to $0$ otherwise, which gives
\begin{align*}
	\E\left[\er^{+}_i(v) \,\middle|\, \er^{-}_i(v)\right]= \frac{\er^{-}_i(v)}{\rmax} \cdot \rmax = \er^{-}_i(v).
\end{align*}
Now, the first result in the lemma follows by combining these two cases, and applying the law of total expectation yields $\E\left[\er^{+}_i(v)\right] = \E\left[\er^{-}_i(v)\right]$ for all $i \in [0, L-1]$ and $v\in V$.

For the last part of the proof, we use induction on $i$.
For $i=0$, $\E\left[\er^{-}_i(v)\right] \le \r^{-}_{i}(v)$ trivially holds.
For $i \in [1,L-1]$, by the \push operations, we have
\begin{align*}
	\er^{-}_i(v) = \sum_{w \in \Nout(v) \cap \setepsminus} \frac{(1-\alpha)\er^{+}_{i-1}(w)}{\dout(v)} \le \sum_{w \in \Nout(v)} \frac{(1-\alpha)\er^{+}_{i-1}(w)}{\dout(v)}.
\end{align*}
Taking the expectation of both sides, we obtain
\begin{align*}
	\E\left[\er^{-}_i(v)\right] \le \sum_{w \in \Nout(v)} \frac{(1-\alpha)\E\left[\er^{-}_{i-1}(w)\right]}{\dout(v)} \le \sum_{w \in \Nout(v)} \frac{(1-\alpha)\r^{-}_{i-1}(w)}{\dout(v)},
\end{align*}
where we plugged in $\E\left[\er^{+}_{i-1}(w)\right] = \E\left[\er^{-}_{i-1}(w)\right]$ and the inductive hypothesis that $\E\left[\er^{-}_{i-1}(w)\right] \le r^{-}_{i-1}(w)$ for all $w \in V$.
Finally, by the property of the \push operations in \Cref{alg:pushNoThreshold}, the last summation equals $\r^{-}_{i}(v)$, completing the proof.
\end{proof}

Now we can establish an upper bound on the expected time complexity of \Cref{alg:roundingpush}, formalized in the following lemma.

\begin{lemma} \label{lem:complexity_roundingpush}
	The expected time complexity of \Cref{alg:roundingpush} is upper bounded by
	\begin{align*}
		O\left(n_r(\iprime+1)+\frac{n \vpi(t)}{\rmax} \cdot \min\left\{\Deltain, \Deltaout, \sqrt{m}\right\}\right).
	\end{align*}
\end{lemma}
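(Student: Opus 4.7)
The runtime of $\roundingpush$ decomposes additively into the Monte Carlo phase (\Cref{alg:MC}) and the randomized push phase, which I bound separately. The Monte Carlo phase consists of $n_r$ walks, each of expected length $1/\alpha+\iprime=O(\iprime+1)$ with $O(1)$ work per step, so it contributes $O(n_r(\iprime+1))$ expected time, matching the first additive term.

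For the push phase, the cost of iterating over nodes with nonzero residue and performing the rounding is dominated (up to constants) by the actual push cost, since every nonzero entry of $\er_{i+1}$ is generated by a push at level $i$ on some $v$ with $u\in\Nin(v)$. So I focus on bounding the expected cost of push operations. A push at $v$ at level $i$ costs $O(\din(v))$ and fires exactly when $\er^+_i(v)\ge \rmax$; inspecting the rounding rule case-by-case yields $\Pr\bigl[\er^+_i(v)\ge \rmax\mid \er^-_i(v)\bigr]\le \er^-_i(v)/\rmax$ regardless of whether $\er^-_i(v)$ sits below or above $\rmax$. Taking expectations and invoking \Cref{lem:exp}, the expected total push cost is at most $(1/\rmax)\sum_{i=0}^{L-1}\sum_{v\in V}\din(v)\,\r^-_i(v)$, where $\r^-_i(v)$ is the deterministic residue from $\pushNoThreshold$.

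The remaining task is to upper bound $\sum_{i,v}\din(v)\,\r^-_i(v)$ by $\tO(n\vpi(t)\cdot B)$ for each choice of $B\in\{\Deltain,\Deltaout,\sqrt m\}$. I will rely on three structural facts. Combining the invariant of \Cref{lem:invariant_backward} (applied level by level) with $\vpi(v)\ge \alpha/n$ gives the global mass bound $\sum_v\r^-_i(v)\le n\vpi(t)/\alpha$ at every level. The $\pushNoThreshold$ recursion yields the edge-exchange identity $\sum_v\din(v)\,\r^-_i(v)=(1-\alpha)^{-1}\sum_u\dout(u)\,\r^-_{i+1}(u)$. Unrolling the recursion gives the pointwise decay $\r^-_i(v)\le (1-\alpha)^i$. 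The $\Deltain$ bound is then immediate from $\din(v)\le\Deltain$ and the mass bound, summed over the $L=\tO(1)$ rounds. The $\Deltaout$ bound follows from the edge-exchange identity together with $\dout(u)\le\Deltaout$. The $\sqrt m$ bound arises from Cauchy--Schwarz $\sum_v\din(v)\,\r^-_i(v)\le \sqrt m\,\sqrt{\sum_v\din(v)\,\r^-_i(v)^2}$ combined with $\r^-_i(v)^2\le (1-\alpha)^i\,\r^-_i(v)$, summed geometrically over $i$.

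\paragraph{Main obstacle.}
The delicate step is getting the $\sqrt m$ bound in the cleaner $\sqrt m\cdot n\vpi(t)$ form. The naive single-level Cauchy--Schwarz just sketched yields $\sum_v\din(v)\,\r^-_i(v)\le m(1-\alpha)^i$, which sums over $i$ to $O(m)$---valid, but strictly weaker than $\sqrt m\cdot n\vpi(t)$ when $n\vpi(t)$ is small. Matching the lemma's target requires interleaving the per-level Cauchy--Schwarz with the global mass bound, for example by splitting nodes according to whether $\din(v)\le\sqrt m$ or $\din(v)>\sqrt m$: the low-degree part gives a $\sqrt m$ factor against the total residue mass, while the high-degree part admits a complementary argument tied to the invariant. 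Making these two halves fit together cleanly across all $L$ rounds is the main technical step I expect.
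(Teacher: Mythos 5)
Your Monte Carlo accounting and the reduction of the expected push cost to $\frac{1}{\rmax}\sum_{i=0}^{L-1}\sum_{v\in V}\din(v)\,\r^-_i(v)$ via the indicator/Markov step and \Cref{lem:exp} are correct and match the paper. The gap is in how you then bound this sum. You propose to bound each level separately via the per-level mass bound $\sum_v\r^-_i(v)\le n\vpi(t)/\alpha$ and sum over $L$ rounds, which loses a factor of $L=\Theta(\log n)$ relative to the stated $O(\cdot)$ bound; and, more importantly, you explicitly leave the $\sqrt m$ case open, sketching a degree-splitting idea that is not worked out. That is the crux: your naive Cauchy--Schwarz yields only $O(m)$, and the high-degree half of your proposed split is not carried through, so the lemma is not actually established.

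The paper sidesteps both issues with one observation: in $\pushNoThreshold$, $\r^-_i(v)=\p_i(v)/\alpha$ at termination, and the invariant of \Cref{lem:invariant_backward} gives $\sum_{i=0}^{L-1}\p_i(v)\le\vpi(v,t)$. Hence $\sum_{i,v}\din(v)\,\r^-_i(v)\le\frac{1}{\alpha}\sum_v\vpi(v,t)\din(v)$ with no spurious $L$ factor, and the remaining quantity $\sum_v\vpi(v,t)\din(v)$ is exactly the cost governing the classical $\bpush$ with threshold $\rmax$, for which \cite[Theorem 1.1]{wang2024revisiting} already proves the bound $O\bigl(n\vpi(t)\min\{\Deltain,\Deltaout,\sqrt m\}\bigr)$. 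If you want a self-contained argument rather than a citation, the key is to aggregate across levels first (using $\sum_i\p_i(v)\le\vpi(v,t)$ and $\sum_v\vpi(v,t)=n\vpi(t)$) and only then apply Cauchy--Schwarz or the degree split; working level by level is what forces the weaker $O(m)$ outcome.
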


\begin{proof}
First, the Monte Carlo phase (\Cref{alg:MC}) takes expected $O\big(n_r(\iprime+1)\big)$ time, since each random walk sampling takes expected $O(\iprime+1/\alpha)=O(\iprime+1)$ time and there are $n_r$ samplings.
For the push phase, its complexity can be upper bounded by the summation of $\din(v)$ times the actual number of \push operations performed to $v$, summed over all $v \in \setepsminus$.
Note that we can omit the rounding operations in this calculation since the cost of each rounding operation on node $v$ at level $i \ge 1$ can be charged to the \push operations that caused $\er^{-}_i(v)$ to be nonzero.
Hence, the complexity of the push phase can be upper bounded by $O\left(\sum_{i=0}^{L-1}\sum_{v \in \setepsminus}\indicator{\er^{+}_i(v) \ge \rmax} \cdot \din(v)\right)$, where $\indicator{\cdot}$ is the indicator function.
Based on this, we can bound its expected complexity by
\begin{align*}
	& \phantom{{}={}} O\left(\sum_{i=0}^{L-1}\sum_{v \in V}\E\left[\indicator{\er^{+}_i(v) \ge \rmax}\right] \cdot \din(v)\right) \\
	& \le O\left(\sum_{i=0}^{L-1}\sum_{v \in V}\E\left[\frac{\er^{+}_i(v)}{\rmax}\right] \cdot \din(v) \right) \le O\left(\frac{1}{\rmax}\sum_{i=0}^{L-1}\sum_{v \in V}\r^{-}_i(v)\din(v)\right),
\end{align*}
where the last inequality follows from the second result in \Cref{lem:exp}.

We now proceed to bound the quantity $\frac{1}{\rmax}\sum_{i=0}^{L-1}\sum_{v \in V}\r^{-}_i(v)\din(v)$ in the context of \Cref{alg:pushNoThreshold}.
By the procedure of \Cref{alg:pushNoThreshold}, we can see that for each $i \in [0,L-1]$ and $v \in V$, $\r^{-}_i(v)$ equals the value of $\p_i(v)/\alpha$ when the algorithm terminates.
Additionally, by the invariant property, we have $\sum_{i=0}^{L-1}\p_i(v) \le \vpi(v,t)$ at termination, for each $v \in V$.
Thus,
\begin{align*}
	\frac{1}{\rmax}\sum_{i=0}^{L-1}\sum_{v \in V}\r^{-}_i(v)\din(v) \le \frac{1}{\alpha\rmax}\sum_{v \in V}\vpi(v,t)\din(v).
\end{align*}
Interestingly, this precisely matches the complexity bound of the original \bpush algorithm~\cite{andersen2008local} with threshold $\rmax$.
As shown in \cite[Theorem 1.1]{wang2024revisiting}, this expression can be bounded by $O\left(\frac{n \vpi(t)}{\rmax} \cdot \min\left\{\Deltain, \Deltaout, \sqrt{m}\right\}\right)$, which finishes the proof.
\end{proof}

Next, we analyze the error guarantee of the estimate $\epi(t)$ returned by \roundingpush.
To this end, we define the following auxiliary random variable $Y$, which is a function of the random variables $\ep_i(v)$ and $\er_i(v)$ for $i \in [0,L-1]$ and $v \in V$ and will serve as a bridge between the estimate $\epi(t)$ and the exact PageRank value $\vpi(t)$:
\begin{align}
	Y = \sum_{i=0}^{L-1}\sum_{v \in V}\big( \ep_i(v)/n + \vpi(v) \er_i(v) \big). \label{eqn:Y}
\end{align}
Note that compared to $\epi(t)=\sum_{i=0}^{L-1}\sum_{v\in V}\big(\ep_i(v)/n+\tpi(v)\er_i(v)\big)$, \Cref{eqn:Y} involves the exact PageRank values $\vpi(v)$'s as opposed to the Monte Carlo estimates $\tpi(v)$'s.
For each $i \in [0,L-1]$, we let $Y^{(i)}$ denote the value of $Y$ immediately before the process at level $i$; and we let $Y^{(L)}$ denote the value of $Y$ when the algorithm terminates.
Now, the following lemma states that, to guarantee that $\epi(t)$ is close to $\vpi(t)$ with high probability, it suffices to ensure that $Y^{(L-1)}$ is close to $\vpi(t)$ with high probability, provided that event $\event$ holds.

\begin{lemma} \label{lem:reduction}
	If the following hold:
	\begin{align*}
		\Pr\left\{\left|Y^{(L-1)}-\vpi(t)\right| \ge \frac{1}{10} \vpi(t) \,\middle|\, \event \right\} \le \frac{1}{40},
	\end{align*}
	then \roundingpush guarantees that the output $\epi(t)$ satisfies
	\begin{align*}
		\Pr\left\{\big|\epi(t)-\vpi(t)\big| \ge \frac{1}{2} \vpi(t) \right\} \le \frac{1}{10}.
	\end{align*}
\end{lemma}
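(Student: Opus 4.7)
The plan is a union bound over three failure modes, followed by a triangle-inequality decomposition under the good event. By \Cref{lem:mc_bound} we have $\Pr\{\event^c\}\le 1/20$, and by hypothesis $\Pr\{A^c\mid\event\}\le 1/40$, where $A=\{|Y^{(L-1)}-\vpi(t)|\le\vpi(t)/10\}$. Thus $\Pr\{(\event\cap A)^c\}\le 3/40$, and it suffices to show that, conditionally on $\event\cap A$, the event $\{|\epi(t)-\vpi(t)|>\vpi(t)/2\}$ has probability at most $1/40$.

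Conditional on $\event\cap A$, I split the error via
\[
	|\epi(t)-\vpi(t)| \le |\epi(t)-Y^{(L)}| + |Y^{(L)}-Y^{(L-1)}| + |Y^{(L-1)}-\vpi(t)|.
\]
The last term is $\le\vpi(t)/10$ by $A$. For the first, observing that at termination $\er_i(v)=0$ for every $v\in\setepsminus$ and every $i\in[0,L-1]$, it collapses to $\sum_{i,v\in\seteps}(\tpi(v)-\vpi(v))\er_i(v)$; combining \Cref{eqn:E_1} with \Cref{lem:E_bound} yields $|\tpi(v)-\vpi(v)|\le\vpi(v)/10$ on $\seteps$, so this piece is at most $Y^{(L)}/10$.

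The middle term is the main work. I would introduce the auxiliary quantity $\Phi = Y + \sum_v \vpi(v)\er_L(v)$. Each \push operation preserves $\Phi$ exactly, via the PageRank recursion $\vpi(v)=\alpha/n+\sum_{u\in\Nin(v)}(1-\alpha)\vpi(u)/\dout(u)$ (the same identity underlying \Cref{lem:invariant_backward}), and each rounding operation preserves $\Phi$ in expectation (the argument of \Cref{lem:exp}). Since $\er_L(v)=0$ at the start of level $L-1$, $\Phi^{(L-1)}=Y^{(L-1)}$, so $\Phi^{(L)}-\Phi^{(L-1)}$ is a zero-mean sum of rounding increments at level $L-1$, and
\[
	|Y^{(L)}-Y^{(L-1)}| \le |\Phi^{(L)}-\Phi^{(L-1)}| + \textstyle\sum_v\vpi(v)\er_L(v).
\]
For the rounding noise, each increment is bounded by $\rmax\vpi(v)\le 3\eps\rmax$ and has conditional variance at most $\vpi(v)^2\er^-_{L-1}(v)\rmax\le 3\eps\rmax\cdot\vpi(v)\er^-_{L-1}(v)$ (using $\vpi(v)\le 3\eps$ on $\setepsminus$ from \Cref{lem:E_bound}); summing gives total variance $O(\eps\rmax Y^{(L-1)})=O(\eps\rmax\vpi(t))$ on $A$, so with the setting $\rmax=\vpi(t)/(5000L\eps)$, \Cref{thm:freedman} gives $|\Phi^{(L)}-\Phi^{(L-1)}|\le\vpi(t)/20$ with failure probability $\le 1/80$. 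For the lost residue, extending the induction of \Cref{lem:exp} to $i=L$ and combining with the geometric decay $\sum_v\r^-_L(v)\le(1-\alpha)^L$ in \pushNoThreshold, the choice of $L$, and the trivial bound $\vpi(t)\ge\alpha/n$, gives $\E\big[\sum_v\vpi(v)\er_L(v)\big]\le(1-\alpha)^L\le\vpi(t)/400$; Markov's inequality then yields $\sum_v\vpi(v)\er_L(v)\le\vpi(t)/5$ with failure probability $\le 1/80$. Plugging all bounds back (using $Y^{(L)}\le Y^{(L-1)}+|Y^{(L)}-Y^{(L-1)}|$) closes the triangle inequality with the required slack.

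The main obstacle is the middle term: because $\epi(t)$ deliberately omits the $\er_L(v)$ contributions, level $L-1$ introduces both a zero-mean rounding fluctuation and a systematic non-negative bias $\sum_v\vpi(v)\er_L(v)$. The martingale $\Phi$ is the key conceptual move that disentangles these two effects, so that Freedman controls the rounding while the choice of $L=\Theta(\log n)$ (via Markov) controls the bias.
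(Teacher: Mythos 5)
Your proof is correct and follows the same skeleton as the paper's: a union bound over failure modes, followed by the triangle-inequality decomposition
\[
\big|\epi(t)-\vpi(t)\big| \le \big|\epi(t)-Y^{(L)}\big| + \big|Y^{(L)}-Y^{(L-1)}\big| + \big|Y^{(L-1)}-\vpi(t)\big|,
\]
with the last term controlled by the hypothesis and the first by the observation that $\er_i(v)=0$ for all $v\in\setepsminus$ at termination combined with $\big|\tpi(v)-\vpi(v)\big|\le\vpi(v)/10$ on $\seteps$ under $\event$ (this is exactly the paper's \Cref{lem:Y_L_error}).

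Where you diverge is the middle term. The paper's \Cref{lem:Y_L_diff} disposes of it with Markov alone: from $-\sum_v\vpi(v)\er^-_{L-1}(v)\le Y^{(L)}-Y^{(L-1)}\le\sum_v\frac{\alpha}{n}\er^+_{L-1}(v)$ and $\E[\er^+_{L-1}(v)]=\E[\er^-_{L-1}(v)]\le\r^-_{L-1}(v)\le(1-\alpha)^{L-1}$, it bounds $\E\big[|Y^{(L)}-Y^{(L-1)}|\big]\le(1-\alpha)^{L-1}\le\frac{\alpha}{400n}\le\frac{\vpi(t)}{400}$, and Markov gives failure probability $\le 1/40$. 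You instead introduce $\Phi=Y+\sum_v\vpi(v)\er_L(v)$, observe $\Phi^{(L-1)}=Y^{(L-1)}$ and $\E[\Phi^{(L)}]=\Phi^{(L-1)}$, split $|Y^{(L)}-Y^{(L-1)}|\le|\Phi^{(L)}-\Phi^{(L-1)}|+\sum_v\vpi(v)\er_L(v)$, and control the two pieces by Freedman and Markov respectively. This works (your $\Phi$ is the natural extension of the paper's $\{Y\}$-martingale that repairs the leakage into level-$L$ residues), but it is over-engineered for this step: the \emph{entire} level-$(L-1)$ contribution to $Y$ is $O\big((1-\alpha)^L\big)$ in expectation, so a single Markov application suffices and no second concentration argument is needed. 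The paper deliberately stops the Freedman martingale at level $L-2$ (in \Cref{lem:Y_L_minus_1_concentration}) precisely so that the last level can be discarded by this cheaper Markov argument. Your route costs you some numerical slack (the $1/80+1/80$ budget is tight, and the conditioning on $\event\cap A$ for Markov deserves an extra line), but both issues are repairable by adjusting constants.

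**
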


Before proving \Cref{lem:reduction}, we prove the following two lemmas that bounds $\left|Y^{(L)}-Y^{(L-1)}\right|$ and $\left|Y^{(L)}-\epi(t)\right|$.

\begin{lemma} \label{lem:Y_L_diff}
	$\Pr\left\{\left|Y^{(L)}-Y^{(L-1)}\right| \ge \frac{1}{10} \vpi(t)\right\} \le \frac{1}{40}.$
\end{lemma}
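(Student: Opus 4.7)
The plan is to compute $Y^{(L)}-Y^{(L-1)}$ explicitly at level $L-1$, decompose it into a zero-mean martingale increment plus a nonnegative truncation term, and bound each by a short concentration argument (taking event $\event$ as given, since the bound $\vpi(v)\le 3\eps$ on $\setepsminus$ from \Cref{lem:E_bound} is used throughout).

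First, by tracking which pieces of $Y$ change at level $L-1$, direct bookkeeping gives
\[
	Y^{(L)} - Y^{(L-1)} \;=\; \sum_{v \in \setepsminus} \Big(\alpha\, \er^+_{L-1}(v)/n \;-\; \vpi(v)\,\er^-_{L-1}(v)\Big),
\]
because the level-$(L-1)$ push updates $\ep_{L-1}(v)/n$ from $0$ to $\alpha\er^+_{L-1}(v)/n$ and zeroes $\vpi(v)\er_{L-1}(v)$ for processed $v\in\setepsminus$, while the residues propagated to $\er_L(u)$ sit outside the sum defining $Y$. Substituting the PageRank recursion $\alpha/n = \vpi(v)-\sum_{u\in\Nin(v)}(1-\alpha)\vpi(u)/\dout(u)$ from \Cref{eqn:iterative_pagerank} and swapping the order of summation identifies the discarded push mass as $\sum_u\vpi(u)\er_L(u)$, so
\[
	Y^{(L)} - Y^{(L-1)} \;=\; \underbrace{\sum_{v\in\setepsminus}\vpi(v)\big(\er^+_{L-1}(v)-\er^-_{L-1}(v)\big)}_{Z} \;-\; \underbrace{\sum_{u\in V}\vpi(u)\,\er_L(u)}_{T},
\]
where $Z$ has conditional mean zero given $\er^-_{L-1}$ by the first identity of \Cref{lem:exp}, and $T\ge 0$ is the $\vpi$-weighted residue at level $L$ that never enters $Y$.

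To bound $|Z|$ I would use Chebyshev. Conditionally on $\er^-_{L-1}$ the rounding outcomes are independent across $v$ with $\mathrm{Var}(\er^+(v)\mid\er^-(v))\le \rmax\er^-(v)$. Combined with $\vpi(v)\le 3\eps$ on $\setepsminus$ (\Cref{lem:E_bound}) and $\E[\er^-_{L-1}(v)]\le \r^-_{L-1}(v)$ (\Cref{lem:exp}), the second moment satisfies $\E[Z^2]\le 3\eps\rmax\sum_v\vpi(v)\r^-_{L-1}(v)$, and this last sum is the $\vpi$-weighted residue in \pushNoThreshold before its final round, bounded by $(1-\alpha)^{L-1}\le \alpha/(400n)\le \vpi(t)/400$ (using $\vpi(t)\ge\alpha/n$). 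Substituting $\rmax=\vpi(t)/(5000L\eps)$ then makes $\E[Z^2]$ a tiny fraction of $\vpi(t)^2$, so Chebyshev bounds $\Pr\{|Z|\ge \vpi(t)/20\}$ well below $1/80$. To bound $T$, \Cref{lem:exp} gives $\E[T]\le \sum_u\vpi(u)\r^-_L(u)$, which by the invariant applied to \pushNoThreshold equals the $L$-hop tail PageRank $\vpi_{\ge L}(t)\le (1-\alpha)^L\le \vpi(t)/400$, after which Markov controls $\Pr\{T\ge \vpi(t)/20\}$. A triangle inequality $|Y^{(L)}-Y^{(L-1)}|\le |Z|+T$ together with a union bound then deliver the target $1/40$ tail.

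The main obstacle is spotting the decomposition: the PageRank recursion is what makes the reserve contribution $\alpha\er^+/n$ cancel cleanly against the $\vpi$-weighted propagation into $\er_L$, exposing a genuine zero-mean martingale $Z$ whose conditional variance is tamed by the small product $\eps\rmax$. Once this is in place, the Chebyshev and Markov steps are routine given the parameter choices for $L$ and $\rmax$.
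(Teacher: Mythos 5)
Your decomposition $Y^{(L)}-Y^{(L-1)} = Z - T$ via the PageRank recursion (\Cref{eqn:iterative_pagerank}), where $Z$ is a zero-mean rounding increment and $T = \sum_u \vpi(u)\er_L(u) \ge 0$ is the $\vpi$-weighted discarded residue, is correct and is a genuinely different framing. The paper never separates out the zero-mean piece: it sandwiches $Y^{(L)}-Y^{(L-1)}$ between $-\sum_v\vpi(v)\er^-_{L-1}(v)$ and $\sum_v(\alpha/n)\er^+_{L-1}(v)$, bounds $\E\big[\big|Y^{(L)}-Y^{(L-1)}\big|\big]$ via \Cref{lem:exp} and the inductive fact $\r^-_{L-1}(v) \le (1-\alpha)^{L-1} \le \alpha/(400n)$, and closes with a single application of Markov at threshold $\frac{1}{10}\vpi(t)$ to get exactly $1/40$. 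Both approaches rest on the same two ingredients (\Cref{lem:exp} and the $(1-\alpha)^i$ bound on deterministic residues), and what your decomposition buys is a cleaner conceptual picture of why the level-$(L-1)$ difference is small: a zero-mean fluctuation plus a small discarded mass.

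However, there is a genuine numerical gap in the concluding step. Your union bound at thresholds $\frac{1}{20}\vpi(t)$ does not land at $1/40$: Markov with $\E[T]\le \frac{1}{400}\vpi(t)$ gives $\Pr\big\{T\ge\frac{1}{20}\vpi(t)\big\}\le \frac{1}{20}$, which alone already exceeds $1/40$ before you add the $Z$-tail. Shifting the split to $T\ge(\frac{1}{10}-c)\vpi(t)$, $|Z|\ge c\vpi(t)$ does not repair this, since the Markov bound on $T$ still approaches $1/40$ from above as $c\to 0$ and can never drop below it, so the sum always exceeds $1/40$. This is fixable by enlarging $L$ (or adjusting other constants downstream), but as stated the argument overshoots by roughly a factor of two. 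A cleaner salvage that uses your decomposition but avoids the union-bound slack is to bound $\E\big[\big|Y^{(L)}-Y^{(L-1)}\big|\big] \le \E[|Z|]+\E[T]$ and apply Markov once; here $\E[|Z|]\le\sqrt{\E[Z^2]} = O\big(\vpi(t)/\sqrt{L}\big)$ is asymptotically negligible relative to $\E[T]$, so the bound converges to the paper's $1/40$ for sufficiently large $n$. Relatedly, the Chebyshev step on $Z$ is overkill: Jensen already gives $\E[|Z|]\le\sqrt{\E[Z^2]}$ tiny, so Markov suffices throughout and keeps the proof closer to the paper's.
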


\begin{proof}
By the definition of $Y$ (\Cref{eqn:Y}) and the rounding and \push operations at level $L-1$, we have
\begin{align*}
	Y^{(L)}-Y^{(L-1)} & = \sum_{v \in \setepsminus}\left(\vpi(v)\left(\er^{+}_{L-1}(v)-\er^{-}_{L-1}(v)\right) + \frac{\alpha}{n} \cdot \er^{+}_{L-1}(v) - \vpi(v)\er^{+}_{L-1}(v)\right) \\
	& = \sum_{v \in \setepsminus}\left(\frac{\alpha}{n} \cdot \er^{+}_{L-1}(v) - \vpi(v)\er^{-}_{L-1}(v)\right).
\end{align*}
Therefore, $-\sum_{v \in V}\vpi(v)\er^{-}_{L-1}(v) \le Y^{(L)}-Y^{(L-1)} \le \sum_{v \in V}\frac{\alpha}{n} \cdot \er^{+}_{L-1}(v)$, which together with \Cref{lem:exp} and $\vpi(v) \ge \frac{\alpha}{n}$ for all $v \in V$ yields that
\begin{align*}
	\E\left[\left|Y^{(L)}-Y^{(L-1)}\right|\right] \le \E\left[\sum_{v \in V}\vpi(v)\er^{-}_{L-1}(v)\right] \le \sum_{v \in V}\vpi(v)\r^{-}_{L-1}(v).
\end{align*}

Now we show that, in \Cref{alg:pushNoThreshold}, $\r^{-}_{i}(v) \le (1-\alpha)^{i}$ for all $i \in [0,L-1]$ and $v \in V$.
We prove this by using induction on $i$.
For the base case where $i=0$, $\r^{-}_{0}(v) \le (1-\alpha)^{0} = 1$ trivially holds for all $v \in V$.
For $i \in [1,L-1]$, we have
\begin{align*}
	\r^{-}_{i}(v) & = \sum_{w \in \Nout(v)}\frac{(1-\alpha)\r^{-}_{i-1}(w)}{\dout(v)} \le \sum_{w \in \Nout(v)}\frac{(1-\alpha)(1-\alpha)^{i-1}}{\dout(v)} = (1-\alpha)^{i},
\end{align*}
where the inequality follows from the inductive hypothesis that $\r^{-}_{i-1}(w) \le (1-\alpha)^{i-1}$ for all $w \in V$.
This finishes the inductive proof.

Consequently, since $L=\lceil\log_{1-\alpha}\frac{\alpha}{400n}\rceil+1$, we can guarantee that
\begin{align*}
	\E\left[\left|Y^{(L)}-Y^{(L-1)}\right|\right] \le \sum_{v\in V}\vpi(v)\r^{-}_{L-1}(v) \le \sum_{v\in V}\vpi(v)(1-\alpha)^{L-1} = (1-\alpha)^{L-1} \le \frac{\alpha}{400n}.
\end{align*}
To conclude the proof, we apply Markov's inequality to derive that
\begin{equation*}
	\Pr\left\{\left|Y^{(L)}-Y^{(L-1)}\right| \ge \frac{1}{10} \vpi(t)\right\} \le \frac{\E\left[\left|Y^{(L)}-Y^{(L-1)}\right|\right]}{\frac{1}{10} \vpi(t)} \le \frac{\frac{\alpha}{400n}}{\frac{1}{10} \cdot\frac{\alpha}{n}} = \frac{1}{40}. \qedhere
\end{equation*}
\end{proof}

\begin{lemma} \label{lem:Y_L_error}
	If event $\event$ holds, then we can guarantee that
	\begin{align*}
		\left|Y^{(L)}-\epi(t)\right| \le \frac{1}{10} Y^{(L)}.
	\end{align*}
\end{lemma}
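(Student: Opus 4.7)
\textbf{Proof proposal for Lemma \ref{lem:Y_L_error}.} The plan is to expand the difference $Y^{(L)}-\epi(t)$ directly from the definitions and show that the only terms that survive come from nodes in $\seteps$, where the Monte Carlo estimate $\tpi(v)$ is provably close to $\vpi(v)$ under event $\event$. From the formulas for $Y$ and $\epi(t)$, the reserve terms $\ep_i(v)/n$ cancel, leaving
\[
Y^{(L)}-\epi(t) \;=\; \sum_{i=0}^{L-1}\sum_{v\in V}\bigl(\vpi(v)-\tpi(v)\bigr)\er_i(v),
\]
where every $\er_i(v)$ is evaluated at termination of \roundingpush.

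The first key step is to note that after the $i$-th level has been processed, every $v\in\setepsminus$ with a positive residue underwent first the rounding operation and then (if still nonzero) the \push operation, which zeros $\er_i(v)$. Hence at termination, $\er_i(v)=0$ for every $v\in\setepsminus$ and every $i\in[0,L-1]$, so the sum collapses to
\[
Y^{(L)}-\epi(t) \;=\; \sum_{i=0}^{L-1}\sum_{v\in\seteps}\bigl(\vpi(v)-\tpi(v)\bigr)\er_i(v).
\]

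Next, I would bound $|\vpi(v)-\tpi(v)|$ for $v\in\seteps$. Under event $\event$, \Cref{eqn:E_1} gives $|\tpi(v)-\vpilarge(v)|\le \tfrac{1}{20}\vpilarge(v)$, while \Cref{lem:E_bound} gives $\vpi(v)-\vpilarge(v)\le \tfrac{1}{20}\vpilarge(v)$. Combining these by the triangle inequality yields $|\vpi(v)-\tpi(v)|\le \tfrac{1}{10}\vpilarge(v)\le \tfrac{1}{10}\vpi(v)$, where the last inequality uses $\vpilarge(v)\le \vpi(v)$ by definition.

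Finally, since the $\er_i(v)$ are nonnegative, I can apply the above bound termwise and obtain
\[
\bigl|Y^{(L)}-\epi(t)\bigr| \;\le\; \frac{1}{10}\sum_{i=0}^{L-1}\sum_{v\in\seteps}\vpi(v)\er_i(v) \;\le\; \frac{1}{10}\sum_{i=0}^{L-1}\sum_{v\in V}\bigl(\ep_i(v)/n+\vpi(v)\er_i(v)\bigr) \;=\; \frac{1}{10}\,Y^{(L)},
\]
where in the middle step the reserves $\ep_i(v)/n$ are added back in and extending the inner sum from $\seteps$ to $V$ can only increase the right-hand side. The only nontrivial part of the argument is the observation that the $\setepsminus$-residues have been flushed to zero by termination; once that is in hand the rest is essentially algebraic manipulation combined with the concentration guarantees of event $\event$.
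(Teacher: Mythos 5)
Your proof is correct and follows essentially the same route as the paper's: expand $Y^{(L)}-\epi(t)$, observe that the $\setepsminus$-residues are zero at termination so only $v\in\seteps$ contribute, bound $|\tpi(v)-\vpi(v)|\le\frac{1}{10}\vpi(v)$ under event $\event$ via \Cref{eqn:E_1} and \Cref{lem:E_bound}, and conclude by nonnegativity of all the terms defining $Y^{(L)}$. The only cosmetic difference is that you make the step $\vpilarge(v)\le\vpi(v)$ explicit where the paper leaves it implicit.
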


\begin{proof}
By the definitions of $Y^{(L)}$ and $\epi(t)$, we have
\begin{align*}
	\big|Y^{(L)}-\epi(t)\big| & = \left|\sum_{i=0}^{L-1}\sum_{v \in V}\big(\tpi(v)-\vpi(v)\big)\er_i(v)\right| \\
	& \le \sum_{i=0}^{L-1}\sum_{v \in V}\big|\tpi(v)-\vpi(v)\big|\er_i(v) = \sum_{i=0}^{L-1}\sum_{v \in \seteps}\big|\tpi(v)-\vpi(v)\big|\er_i(v),
\end{align*}
where the last equality holds because the algorithm guarantees that $\er_i(v)=0$ for all $v \in \setepsminus$ and $i \in [0,L-1]$ at termination.
For any $v \in \seteps$, \Cref{eqn:E_1} and \Cref{lem:E_bound} ensure that $\big|\tpi(v)-\vpilarge(v)\big| \le \frac{1}{20} \vpilarge(v)$ and $\big|\vpilarge(v)-\vpi(v)\big| \le \frac{1}{20} \vpilarge(v)$ holds, under event $\event$.
These inequalities imply that
\begin{align*}
	\big|\tpi(v)-\vpi(v)\big| \le \big|\tpi(v)-\vpilarge(v)\big| + \big|\vpilarge(v)-\vpi(v)\big| \le \frac{1}{20} \vpilarge(v) + \frac{1}{20} \vpilarge(v) \le \frac{1}{10} \vpi(v).
\end{align*}
Thus, we have
\begin{equation*}
	\big|Y^{(L)}-\epi(t)\big| \le \sum_{i=0}^{L-1}\sum_{v \in \seteps}\big|\tpi(v)-\vpi(v)\big|\er_i(v) \le \sum_{i=0}^{L-1}\sum_{v \in \seteps}\frac{1}{10}\vpi(v)\er_i(v) \le \frac{1}{10} Y^{{(L)}}. \qedhere
\end{equation*}
\end{proof}

\begin{proof}[Proof of \Cref{lem:reduction}]
We know that the conditions $\left|Y^{(L)}-Y^{(L-1)}\right| \le \frac{1}{10} \vpi(t)$, $\left|Y^{(L-1)}-\vpi(t)\right| \le \frac{1}{10} \vpi(t)$, and $\left|Y^{(L)}-\epi(t)\right| \le \frac{1}{10} Y^{(L)}$ together imply that
\begin{align*}
	\left|Y^{(L)}-\vpi(t)\right| \le \left|Y^{(L)}-Y^{(L-1)}\right| + \left|Y^{(L-1)}-\vpi(t)\right| \le \frac{1}{5}\vpi(t),
\end{align*}
and therefore further imply that
\begin{align*}
	\big|\epi(t)-\vpi(t)\big| & \le \left|\epi(t)-Y^{(L)}\right| + \left|Y^{(L)}-Y^{(L-1)}\right| + \left|Y^{(L-1)}-\vpi(t)\right| \\
	& \le \frac{1}{10} Y^{(L)} + \frac{1}{10} \vpi(t) + \frac{1}{10} \vpi(t) \\
	& \le \frac{1}{10} \left(1+\frac{1}{5}\right)\vpi(t) + \frac{1}{10}\vpi(t) + \frac{1}{10}\vpi(t) \\
	& < \frac{1}{2}\vpi(t).
\end{align*}
Additionally, by the conditions of the lemma, \Cref{lem:mc_bound}, \Cref{lem:Y_L_diff}, and \Cref{lem:Y_L_error}, along with the union bound, the probability that this result holds is at least $\Pr\{\event\}\left(1-\left(\frac{1}{40}+\frac{1}{40}\right)\right) \ge \frac{19}{20} \cdot \frac{19}{20} > \frac{9}{10}$, which completes the proof.
\end{proof}

The remainder of the proof hinges on demonstrating that $\Pr\left\{\left|Y^{(L-1)}-\vpi(t)\right| \ge \frac{1}{10} \vpi(t) \,\middle|\, \event \right\} \le \frac{1}{40}$, which is the core of our analysis.
For simplicity, in the rest of the analysis we assume that the event $\event$ holds and omit the conditioning on $\event$ in the probabilities and expectations.
We shall investigate a stochastic process that tracks the value of $Y$ throughout the process from level $0$ to $L-2$.
Before doing so, the next lemma establishes a key property that any \push operation at level $i \in [0,L-2]$ does not change the value of $Y$.
This is essentially a counterpart of the invariant property of the original deterministic push process, but here it does not hold for the \push operations at level $L-1$ because the definition of $Y$ (\Cref{eqn:Y}) does not involve the residues at level $L$.

\begin{lemma} \label{lem:invariant}
	Any \push operation at level $i \in [0,L-2]$ leaves the value of $Y$ invariant.
\end{lemma}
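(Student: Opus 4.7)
The plan is to compute the net change in $Y$ caused by a single \push operation at level $i \in [0, L-2]$ on a node $v$, and show that this change is zero. First I would dispose of the trivial case: if $v \in \seteps$ or $\er_i(v)=0$, the conceptual \push is null and no variable appearing in $Y$ is touched, so $Y$ is unchanged.

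For the substantive case $v \in \setepsminus$ with $\er_i(v) > 0$, let $x := \er_i(v)$ denote the residue just before the \push. Only three groups of terms in $Y$ are modified: $\ep_i(v)$ increases by $\alpha x$ (it was $0$ and is set to $\alpha x$), $\er_i(v)$ decreases by $x$ (set to $0$), and each $\er_{i+1}(u)$ with $u \in \Nin(v)$ increases by $(1-\alpha)x/\dout(u)$. Since $i \le L-2$, we have $i+1 \le L-1$, so this last group of changes lies within the index range defining $Y$ and each contributes an additional $\vpi(u)\cdot(1-\alpha)x/\dout(u)$. Aggregating,
\[
\Delta Y = \frac{\alpha x}{n} \;+\; \sum_{u \in \Nin(v)}\frac{(1-\alpha)\vpi(u)}{\dout(u)}\,x \;-\; \vpi(v)\,x \;=\; x\!\left(\frac{\alpha}{n} + \sum_{u \in \Nin(v)}\frac{(1-\alpha)\vpi(u)}{\dout(u)} - \vpi(v)\right).
\]
By the recursive PageRank identity (\Cref{eqn:iterative_pagerank}), the parenthesised expression vanishes, so $\Delta Y = 0$.

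I would then explicitly remark on why the restriction $i \le L-2$ is essential: at level $L-1$ the propagated residues would land in $\er_L$, which is excluded from the summation defining $Y$ in \Cref{eqn:Y}; the compensating contribution $\sum_{u \in \Nin(v)}\vpi(u)(1-\alpha)x/\dout(u)$ would be missing and the cancellation would fail. I do not foresee any genuine obstacle in this lemma—it is essentially a bookkeeping argument transplanting the classical \push invariant (\Cref{lem:invariant_backward}) into the two-indexed $(i,v)$ setting—so the only care needed is to align the indexing of $\ep_i$, $\er_i$, and $\er_{i+1}$ against the range of $i$ in the definition of $Y$.
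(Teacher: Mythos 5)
Your proof is correct and follows essentially the same argument as the paper: compute the net increment to $Y$ caused by the reserve update, the propagation to in-neighbors at level $i+1$, and the zeroing of $\er_i(v)$, then observe that it vanishes by the recursive identity in \Cref{eqn:iterative_pagerank}. Your $x$ corresponds to the paper's $\er^{+}_i(v)$, and your extra remarks on the trivial case and on why the index constraint $i\le L-2$ is needed are accurate (and indeed the latter is why the paper handles level $L-1$ separately in \Cref{lem:Y_L_diff}).
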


\begin{proof}
Consider a \push operation on node $v \in V$ at level $i \in [0,L-2]$.
The operation increments the reserve $\ep_i(v)$ by $\alpha\er^{+}_i(v)$, increments the residue $\er_{i+1}(u)$ by $(1-\alpha)\er^{+}_i(v)/\dout(u)$ for each $u \in \Nin(v)$, and decreases $\er_i(v)$ by $\er^{+}_i(v)$.
Together, the increment of $Y$ caused by these modifications equals
\begin{align*}
	\er^{+}_i(v)\left(\frac{\alpha}{n}+\sum_{u \in \Nin(v)}\vpi(u)\cdot\frac{1-\alpha}{\dout(u)}-\vpi(v)\right) = 0,
\end{align*}
where we used \Cref{eqn:iterative_pagerank}.
This shows that the value of $Y$ remains unchanged after the \push operation in question.
\end{proof}

Thus, to track the value of $Y$ throughout the operations from level $0$ to $L-2$, we only need to care about the rounding operations.
For each $i \in [0,L-2]$ and $v \in V$, we define $Y^{(i,v)+}$ to be the value of $Y$ immediately after the rounding operation on node $v$ at level $i$.
Here the plus sign ``$+$'' is an indication that we are considering the value of $Y$ after the corresponding rounding operation.
For notational convenience, we further assume that the rounding operations at each level are performed according to a fixed arbitrary ordering $v_1,\dots,v_n$ of the nodes in $V$.
By the definitions and \Cref{lem:invariant}, we have $Y^{(0)}=\vpi(t)$ and $Y^{(L-2, v_n)+}=Y^{(L-1)}$.
Now we consider the stochastic process
\begin{equation}
\begin{aligned}
	& Y^{(0)}, \\
	& Y^{(0, v_1)+}, Y^{(0, v_2)+}, \dots, Y^{(0, v_n)+}, \\
	& Y^{(1, v_1)+}, Y^{(1, v_2)+}, \dots, Y^{(1, v_n)+}, \\
	& \vdots \\
	& Y^{(L-2, v_1)+}, Y^{(L-2, v_2)+}, \dots, Y^{(L-2, v_n)+},
\end{aligned}
\label{eqn:martingale}
\end{equation}
which we denote as the stochastic process $\{Y\}$.
By \Cref{lem:exp}, each rounding operation leaves the involved residue (and thus the value of $Y$) unchanged in expectation, so $\{Y\}$ is a martingale with respect to the sequence of the random numbers generated by the rounding operations.
This implies that $\E\left[Y^{(L-1)}\right] = Y^{(0)} = \vpi(t)$.

Our next objective is to apply Freedman's inequality (\Cref{thm:freedman}) to establish an upper bound on the deviation probability of the last term of the martingale $\{Y\}$.
To do so, we need to analyze the difference sequence of the martingale.
By the procedure of the rounding operations and the definition of $Y$, the difference sequence of $\{Y\}$ is
\begin{equation*}
\begin{aligned}
	& \vpi(v_1)\big(\er^{+}_0(v_1)-\er^{-}_0(v_1)\big), \dots, \vpi(v_n)\big(\er^{+}_0(v_n)-\er^{-}_0(v_n)\big), \\
	& \vpi(v_1)\big(\er^{+}_1(v_1)-\er^{-}_1(v_1)\big), \dots, \vpi(v_n)\big(\er^{+}_1(v_n)-\er^{-}_1(v_n)\big), \\
	& \vdots \\
	& \vpi(v_1)\big(\er^{+}_{L-2}(v_1)-\er^{-}_{L-2}(v_1)\big), \dots, \vpi(v_n)\big(\er^{+}_{L-2}(v_n)-\er^{-}_{L-2}(v_n)\big).
\end{aligned}
\end{equation*}
The next lemma gives an upper bound on the absolute value of each term in this difference sequence.

\begin{lemma} \label{lem:difference_bound}
For each $i \in [0, L-2]$ and $v \in V$, $\left|\vpi(v)\big(\er^{+}_i(v)-\er^{-}_i(v)\big)\right| \le 3\eps\rmax$.
\end{lemma}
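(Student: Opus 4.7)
The plan is to do a short case analysis on whether the rounding operation at node $v$ at level $i$ actually modifies the residue. Recall that \roundingpush only performs a nontrivial rounding when $v \in \setepsminus$ and $\er^{-}_i(v) \in (0,\rmax)$; in every other situation $\er^{+}_i(v) = \er^{-}_i(v)$ and the difference vanishes. So I would dispose of the trivial case first by noting that if $v \in \seteps$, or if $\er^{-}_i(v) = 0$, or if $\er^{-}_i(v) \ge \rmax$, then $\er^{+}_i(v) = \er^{-}_i(v)$, hence $\left|\vpi(v)\bigl(\er^{+}_i(v)-\er^{-}_i(v)\bigr)\right| = 0 \le 3\eps\rmax$.

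In the remaining case, $v \in \setepsminus$ and $\er^{-}_i(v) \in (0,\rmax)$. The rounding rule sets $\er^{+}_i(v) \in \{0,\rmax\}$, so $\bigl|\er^{+}_i(v) - \er^{-}_i(v)\bigr| < \rmax$ in both outcomes (since $\er^{-}_i(v) \in (0,\rmax)$). To bound the prefactor $\vpi(v)$, I would invoke \Cref{lem:E_bound}: conditional on event $\event$ (which we are assuming throughout this portion of the analysis), every $v \in \setepsminus$ satisfies $\vpi(v) \le 3\eps$. Multiplying the two bounds then yields $\left|\vpi(v)\bigl(\er^{+}_i(v)-\er^{-}_i(v)\bigr)\right| \le 3\eps \cdot \rmax$, as required.

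There is not really a main obstacle; the lemma is essentially a bookkeeping statement that verifies the two structural ingredients needed to later apply Freedman's inequality to the martingale $\{Y\}$. The only subtle point is remembering that the conditioning on $\event$ from the paragraph preceding \Cref{lem:invariant} is what justifies using the $\vpi(v) \le 3\eps$ bound from \Cref{lem:E_bound}; without this conditioning the claim would be false in general, since a node in $\setepsminus$ could in principle have large true PageRank if the Monte Carlo phase produced an unlucky underestimate.
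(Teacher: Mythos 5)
Your proof is correct and follows essentially the same case split as the paper's: dispose of the trivial case where the rounding is a no-op, and in the nontrivial case bound $|\er^{+}_i(v)-\er^{-}_i(v)|$ by $\rmax$ and $\vpi(v)$ by $3\eps$ via \Cref{lem:E_bound} under event $\event$. Your explicit remark that the bound hinges on conditioning on $\event$ is a nice observation and matches the paper's stated convention of implicitly conditioning on $\event$ throughout this part of the analysis.
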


\begin{proof}
If $\er^{+}_i(v)=\er^{-}_i(v)$, the lemma trivially holds; otherwise, we must have $v \in \setepsminus$ and $\er^{-}_i(v) \in (0,\rmax)$, so $\vpi(v)\le 3\eps$ by \Cref{lem:E_bound} and $\left|\er^{+}_i(v)-\er^{-}_i(v)\right| \le \rmax$ by the procedure of the rounding operation, which leads to $\left|\vpi(v)\big(\er^{+}_i(v)-\er^{-}_i(v)\big)\right| \le 3\eps\rmax$, as desired.
\end{proof}

As the last component required to apply Freedman's inequality, we need to analyze the predictable quadratic variation of the martingale $\{Y\}$.
In fact, we will consider a stopped martingale $\{Z\}$ of $\{Y\}$ and investigate $\{Z\}$.
Specifically, we define $T$ to be the first index pair $(i,v)$ in \eqref{eqn:martingale} such that $\left|Y^{(i,v)+}-\vpi(t)\right| \ge \frac{1}{10} \vpi(t)$ (and $T$ equals $(L-2,v_n)$ if no such index pair exists), then $T$ is a stopping time of $\{Y\}$, and we further define $\{Z\}$ as the stopped martingale of $\{Y\}$ with respect to the stopping time $T$.
That is, $Z^{(i,v)+}$ equals $Y^{(i,v)+}$ if $(i,v)$ is before the stopping time $T$ in \eqref{eqn:martingale}, and equals the value of $Y$ at time $T$ otherwise.
Based on these definitions, the next lemma upper bounds the predictable quadratic variation of $\{Z\}$.

\begin{lemma} \label{lem:predictable_bound}
	The predictable quadratic variation of $\{Z\}$ is upper bounded by $4L\eps\rmax \cdot \vpi(t)$.
\end{lemma}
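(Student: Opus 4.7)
The plan is to bound the predictable quadratic variation step by step. Each step of $\{Z\}$ corresponds to a candidate rounding operation at some $(i,v_k)$, and its conditional variance given the history is zero unless the martingale has not yet been stopped, $v_k\in\setepsminus$, and $\er_i^-(v_k)\in(0,\rmax)$. In that nontrivial case, $\er_i^+(v_k)$ is Bernoulli-valued in $\{0,\rmax\}$ with mean $\er_i^-(v_k)$, so the martingale increment $\vpi(v_k)\bigl(\er_i^+(v_k)-\er_i^-(v_k)\bigr)$ has conditional variance
\[ \vpi(v_k)^2\,\er_i^-(v_k)\bigl(\rmax-\er_i^-(v_k)\bigr)\le \vpi(v_k)^2\,\er_i^-(v_k)\,\rmax. \]
Applying \Cref{lem:E_bound}, which gives $\vpi(v_k)\le 3\eps$ for $v_k\in\setepsminus$, this upper bound becomes $3\eps\rmax\cdot\vpi(v_k)\,\er_i^-(v_k)$.

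Next, I aggregate the bounds across nodes within a fixed level $i$. The key structural fact is that $\er_i^-(v_k)$ is fully determined by the end of level $i-1$: neither the rounding step on $v_j$ for $j<k$ (which alters only $\er_i(v_j)$) nor the subsequent $\push$ on $v_j$ (which propagates only to level $i+1$) can touch $\er_i(v_k)$. Consequently,
\[ \sum_k\vpi(v_k)\,\er_i^-(v_k)=\tilde S_i,\quad\text{where }\tilde S_i:=\sum_{v\in V}\vpi(v)\er_i(v)\text{ evaluated at the start of level }i. \]
At that moment the reserves are non-negative and the residues at every level other than $i$ vanish (levels $<i$ have been cleared, while levels $>i$ have not yet received any contribution). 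Comparing with \Cref{eqn:Y} therefore gives $\tilde S_i\le Y^{(i)}$, where $Y^{(i)}$ denotes the value of $Y$ at the start of level $i$.

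Finally, I combine the previous steps with the stopping rule defining $\{Z\}$. If the stopping time $T$ precedes the start of level $i$, then every conditional variance at level $i$ is zero and the level contributes nothing to $W_N$. Otherwise the stopping rule forces $\bigl|Y^{(i)}-\vpi(t)\bigr|<\tfrac{1}{10}\vpi(t)$, so $\tilde S_i\le Y^{(i)}\le\tfrac{11}{10}\vpi(t)$, and the aggregated level-$i$ conditional variance is at most $3\eps\rmax\cdot\tfrac{11}{10}\vpi(t)=\tfrac{33}{10}\eps\rmax\vpi(t)$. Summing across $i\in[0,L-2]$ yields
\[ W_N\le (L-1)\cdot \tfrac{33}{10}\eps\rmax\vpi(t)\le 4L\eps\rmax\vpi(t), \]
as claimed. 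The main subtlety I expect to handle carefully is interleaving the stopping-time argument with the level-by-level decomposition so that each level's conditional-variance budget is controlled by the deterministic bound $\tfrac{11}{10}\vpi(t)$ on $Y^{(i)}$ rather than by a random quantity; once the ``frozen residue'' observation within a level is in place, the remaining algebra is routine.
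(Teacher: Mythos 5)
Your proof is correct and follows essentially the same route as the paper: per-step conditional variance is bounded by $3\eps\rmax\cdot\vpi(v)\er^-_i(v)$, aggregated over a level to $3\eps\rmax\,Y^{(i)}$, and then controlled by the stopping rule giving $Y^{(i)}\le\frac{11}{10}\vpi(t)$ on unstopped levels, summed over $L-1$ levels. You are somewhat more explicit than the paper about why $\sum_v\vpi(v)\er^-_i(v)\le Y^{(i)}$ (the ``frozen residue'' observation that within-level operations do not alter the level-$i$ residues of yet-unprocessed nodes), which the paper asserts without comment; otherwise the argument and constants match.
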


\begin{proof}
First, the expectation of the square of each term in the difference sequence of $\{Y\}$ conditioned on the past is of the form
\begin{align*}
	\E\left[\Big(\vpi(v)\big(\er^{+}_i(v)-\er^{-}_i(v)\big)\Big)^2 \,\middle|\, \er^{-}_i(v)\right] = \big(\vpi(v)\big)^2\E\left[\big(\er^{+}_i(v)-\er^{-}_i(v)\big)^2 \,\middle|\, \er^{-}_i(v)\right]
\end{align*}
for each $i \in [0,L-2]$ and $v \in V$.
This expectation is nonzero only if $v \in \setepsminus$ and $\er^{-}_i(v) \in (0,\rmax)$, in which case we have $\vpi(v)\le 3\eps$ by \Cref{lem:E_bound} and
\begin{align*}
	\er^{+}_i(v)-\er^{-}_i(v) =
	\begin{cases}
		\rmax-\er^{-}_i(v), & \text{w.p. } \er^{-}_i(v)/\rmax, \\
		-\er^{-}_i(v), & \text{otherwise},
	\end{cases}
\end{align*}
so the expectation above is upper bounded by
\begin{align*}
	& \phantom{{}={}} 3\eps \cdot \vpi(v)\left(\frac{\er^{-}_i(v)}{\rmax} \left(\rmax-\er^{-}_i(v)\right)^2 + \left(1-\frac{\er^{-}_i(v)}{\rmax}\right) \left(\er^{-}_i(v)\right)^2 \right) \\
	& = 3\eps \cdot \vpi(v)\left(\rmax\cdot\er^{-}_i(v)-\left(\er^{-}_i(v)\right)^2\right) \\
	& \le 3\eps\rmax\cdot\vpi(v)\er^{-}_i(v).
\end{align*}
Therefore, the predictable quadratic variation of $\{Y\}$ is upper bounded by
\begin{align}
	\sum_{i=0}^{L-2}\sum_{v \in V} \E\left[\Big(\vpi(v)\big(\er^{+}_i(v)-\er^{-}_i(v)\big)\Big)^2 \,\middle|\, \er^{-}_i(v)\right] \le 3\eps\rmax\sum_{i=0}^{L-2}\sum_{v \in V} \vpi(v) \er^{-}_i(v). \label{eqn:predictable_bound}
\end{align}
Recall that $Y^{(i)}$ is the value of $Y$ immediately before the operations at level $i$.
Thus, $\sum_{v \in V} \vpi(v) \er^{-}_i(v)$ can be upper bounded by $Y^{(i)}$ and \Cref{eqn:predictable_bound} can be further upper bounded by $3\eps\rmax\sum_{i=0}^{L-2} Y^{(i)}$.

Now consider the stopped martingale $\{Z\}$.
As $Z$ remains unchanged after the stopping time $T$ and $\big|Y-\vpi(t)\big| \le \frac{1}{10}\vpi(t)$ holds before the stopping time $T$ by definition, following the above argument, we can upper bound the predictable quadratic variation of $\{Z\}$ by
\begin{align*}
	& \phantom{{}={}} 3\eps\rmax\sum_{i=0}^{L-2} Y^{(i)}\cdot\indicator{\left|Y^{(i)}-\vpi(t)\right| \le \frac{1}{10} \vpi(t)} \\
	& \le 3\eps\rmax\sum_{i=0}^{L-2} \frac{11}{10}\vpi(t) \le 4L\eps\rmax \cdot \vpi(t),
\end{align*}
as desired.
\end{proof}

We are now ready to apply Freedman's inequality to upper bound the probability that $Y^{(L-1)}$ deviates from $\vpi(t)$.

\begin{lemma} \label{lem:Y_L_minus_1_concentration}
	The following holds:
	\begin{align*}
		\Pr\left\{\left|Y^{(L-1)}-\vpi(t)\right| \ge \frac{1}{10} \vpi(t)\right\} \le \frac{1}{40}.
	\end{align*}
\end{lemma}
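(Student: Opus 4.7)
The plan is to apply Freedman's inequality (\Cref{thm:freedman}) to the stopped martingale $\{Z\}$. The two ingredients needed are already in place: for the step-size bound, \Cref{lem:difference_bound} gives that each difference of $\{Y\}$ is bounded in absolute value by $3\eps\rmax$, and this carries over to $\{Z\}$ because stopping can only replace a given difference by zero; for the predictable quadratic variation, \Cref{lem:predictable_bound} already bounds that of $\{Z\}$ by $4L\eps\rmax \cdot \vpi(t)$ directly.

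Invoking Freedman with $R = 3\eps\rmax$, $\sigma^2 = 4L\eps\rmax \cdot \vpi(t)$, and deviation $\mu = \vpi(t)/10$, and then substituting the chosen value $\rmax = \vpi(t)/(5000 L\eps)$ so that $\eps\rmax = \vpi(t)/(5000 L)$, both $\sigma^2$ and $R\mu/3$ become small absolute-constant multiples of $\vpi(t)^2$ (with an extra $1/L$ factor in the latter). Consequently the exponent $\mu^2 / (2(\sigma^2 + R\mu/3))$ is at least a suitably large absolute constant independent of $L$, and a short numerical calculation shows that the resulting bound $2\exp(-c)$ is at most $1/40$.

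What remains is to transfer the tail bound on the stopped endpoint back to $Y^{(L-1)}$. Since $Y^{(L-1)}$ equals $Y^{(L-2,v_n)+}$, the very last term in the sequence \eqref{eqn:martingale}, the event $\{|Y^{(L-1)} - \vpi(t)| \ge \vpi(t)/10\}$ forces the stopping time $T$ to fire at or before the final index pair, so the stopped process ends at a value with deviation at least $\vpi(t)/10$. Hence this event is contained in $\{|Z_{\mathrm{end}} - \vpi(t)| \ge \vpi(t)/10\}$, and Freedman's bound transfers. The main conceptual step, and hence the only one that requires care, is this passage from $\{Z\}$ back to $\{Y\}$ via the stopping-time inclusion; the substitution of parameters and the application of Freedman itself are routine given the machinery already developed, and the conditioning on the event $\event$ assumed earlier in the analysis is preserved throughout.
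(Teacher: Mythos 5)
Your proposal is correct and follows essentially the same route as the paper: apply Freedman's inequality to the stopped martingale $\{Z\}$ with step size $R=3\eps\rmax$ from \Cref{lem:difference_bound} and $\sigma^2=4L\eps\rmax\cdot\vpi(t)$ from \Cref{lem:predictable_bound}, substitute $\rmax=\vpi(t)/(5000L\eps)$ so the exponent is at least an absolute constant ($\ge 5$, giving $2e^{-5}<1/40$), and transfer the bound back to $Y^{(L-1)}$ via the stopping-time inclusion. Your remark that the step-size bound carries over to the stopped process because stopping can only zero out differences is a small explicit justification that the paper leaves implicit, but the argument is the same.
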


\begin{proof}
By the definition of $\{Z\}$, $\left|Y^{(L-2, v_n)+}-\vpi(t)\right| \ge \frac{1}{10}\vpi(t)$ implies that $\left|Z^{(L-2, v_n)+}-\vpi(t)\right| \ge \frac{1}{10}\vpi(t)$.
Thus, by \Cref{lem:difference_bound} and \Cref{lem:predictable_bound}, applying Freedman's inequality (\Cref{thm:freedman}) with $\mu = \frac{1}{10} \vpi(t)$ and $\sigma^2 = 4L\eps\rmax \cdot \vpi(t)$ to the stopped martingale $\{Z\}$ yields that
\begin{align*}
	& \phantom{{}={}} \Pr\left\{\left|Y^{(L-1)}-\vpi(t)\right| \ge \frac{1}{10}\vpi(t)\right\} = \Pr\left\{\left|Y^{(L-2, v_n)+}-\vpi(t)\right| \ge \frac{1}{10}\vpi(t)\right\} \\
	& \le \Pr\left\{\left|Z^{(L-2,v_n)+}-\vpi(t)\right| \ge \frac{1}{10}\vpi(t)\right\} \\
	& \le 2 \exp\left(-\frac{\left(\frac{1}{10}\vpi(t)\right)^2/2}{4L\eps\rmax \cdot \vpi(t) + 3\eps\rmax \cdot \frac{1}{10}\vpi(t)/3}\right) \\
	& \le 2 \exp\left(-\frac{\vpi(t)}{1000L\eps\rmax}\right) = 2 \exp(-5) < \frac{1}{40},
\end{align*}
where in the last line we used $\rmax = \frac{\vpi(t)}{5000 L \eps}$.
\end{proof}

Finally, we establish \Cref{thm:final} by combining the above lemmas and adopting an adaptive setting of $\rmax$.

\begin{proof}[Proof of \Cref{thm:final}]
\Cref{lem:reduction} and \Cref{lem:Y_L_minus_1_concentration} together imply that \roundingpush guarantees that the output $\epi(t)$ satisfies the desired error guarantee of $\Pr\left\{\big|\epi(t)-\vpi(t)\big| \ge \frac{1}{2} \vpi(t) \right\} \le \frac{1}{10}$.
Now we plug in our settings of $\iast$, $\iprime$, $\eps$, $n_r$, $L$, and $\rmax$ to the expected time complexity of $O\left(n_r(\iprime+1) + \frac{n \vpi(t)}{\rmax} \cdot \min\left\{\Deltain, \Deltaout, \sqrt{m}\right\}\right)$ given in \Cref{lem:complexity_roundingpush} to obtain the final complexity bound.
For simplicity, we let $M = \min\left\{\Deltain, \Deltaout, \sqrt{m}\right\}$.
We examine the two cases separately based on the value of $(1-\alpha)\Deltain$.

If $(1-\alpha)\Deltain > 1$, then we have $\iast = \log_{(1-\alpha)\Deltain^2}(n/M)$.
Note that in this case we must have $\Deltain \ge 2$, so $(1-\alpha)\Deltain^2 = (1-\alpha)\Deltain\cdot\Deltain > 2$ and thus $\iprime \le \iast = O(\log n)$.
Additionally,
\begin{align*}
	\eps & = \frac{30\alpha}{n} \cdot \left(\iprime+1\right) \cdot \max\left\{\big((1-\alpha)\Deltain\big)^{\iast}, 1\right\} = \tTheta\left(\frac{\big((1-\alpha)\Deltain\big)^{\iast}}{n}\right) \\
	& = \tTheta\left(\frac{1}{n}\big((1-\alpha)\Deltain\big)^{\log_{(1-\alpha)\Deltain^2}(n/M)}\right) = \tTheta\left(\frac{1}{n} \left(\frac{n}{M}\right)^{\frac{\log\Deltain-\log(1/(1-\alpha))}{2\log\Deltain-\log(1/(1-\alpha))}}\right) \\
	& = \tTheta\left(n^{\frac{-\log\Deltain}{2\log\Deltain-\log(1/(1-\alpha))}} M^{-\frac{\log\Deltain-\log(1/(1-\alpha))}{2\log\Deltain-\log(1/(1-\alpha))}}\right) = \tTheta\left(n^{-\frac{1}{2}-\smallexpo}M^{\smallexpo-\frac{1}{2}}\right),
\end{align*}
where we used $\smallexpo = \frac{\log(1/(1-\alpha))}{4\log{\Deltain} - 2\log(1/(1-\alpha))}$.
Note that
\begin{align*}
	M^{\smallexpo} \le \Deltain^{\smallexpo} = \Deltain^{O(1/\log\Deltain)} = 2^{\log\Deltain\cdot O(1/\log\Deltain)} = O(1).
\end{align*}
Thus, we have $\eps = \tTheta\left(n^{-1/2-\smallexpo}M^{-1/2}\right)$.
We similarly calculate
\begin{align*}
	\Deltain^{\iast} = \Theta\left(\Deltain^{\log_{(1-\alpha)\Deltain^2}(n/M)}\right) = \Theta\left(\left(\frac{n}{M}\right)^{\frac{\log\Deltain}{2\log\Deltain-\log(1/(1-\alpha))}}\right) = \Theta\left(n^{\smallexpo+\frac{1}{2}}M^{-\frac{1}{2}}\right),
\end{align*}
which together with $\iprime = \lfloor \iast \rfloor$ leads to
\begin{align*}
	n_r(\iprime+1) & = \tO\left(\frac{(1-\alpha)^{\iprime}}{\eps}\right) = \tO\left(\frac{(1-\alpha)^{\iast}}{\frac{((1-\alpha)\Deltain)^{\iast}}{n}}\right) = \tO\left(\frac{n}{\Deltain^{\iast}}\right) = \tO\left(n^{\frac{1}{2}-\smallexpo}M^{\frac{1}{2}}\right).
\end{align*}
Also, since $\rmax = \Theta\left(\frac{\vpi(t)}{L\eps}\right)$ and $L = O(\log n)$, we have
\begin{align*}
	\frac{n \vpi(t)}{\rmax} \cdot \min\left\{\Deltain, \Deltaout, \sqrt{m}\right\} = O\left(n L \eps M\right) = \tO\left(n^{\frac{1}{2}-\smallexpo}\Deltain^{\frac{1}{2}}\right).
\end{align*}
Thus, the total complexity is $\tO\left(n^{1/2-\smallexpo}M^{1/2}\right)$ in this case.

If $(1-\alpha)\Deltain \le 1$, we have $\Deltain = O(1)$, $\iast = \log_{1-\alpha}(1/n) = \Theta(\log n)$, $\eps = \frac{30\alpha}{n} \cdot (\iprime+1) = \tTheta\left(\frac{1}{n}\right)$, and $\smallexpo=\frac{1}{2}$.
Consequently,
\begin{align*}
	n_r(\iprime+1) = \tO\left(\frac{(1-\alpha)^{\iprime}}{\eps}\right) = \tO\left(\frac{(1-\alpha)^{\log_{1-\alpha}(1/n)}}{1/n}\right) = \tO(1)
\end{align*}
and
\begin{align*}
	\frac{n \vpi(t)}{\rmax} \cdot \min\left\{\Deltain, \Deltaout, \sqrt{m}\right\} = O\left(n L \eps\right) = \tO\left(1\right).
\end{align*}
Thus, the total complexity is $\tO(1) = \tO\left(n^{1/2-\smallexpo}M^{1/2}\right)$ in this case.

Combined, we have proved the upper bound of
\[
	\tO\left(n^{1/2-\smallexpo}M^{1/2}\right) = \tO\left(n^{1/2}\min\left\{ \Deltain^{1/2} \big/ n^{\smallexpo}, \ \Deltaout^{1/2}\big / n^{\smallexpo}, \ m^{1/4}\big / n^{\smallexpo} \right\}\right).
\]
However, note that the upper bound in the theorem statement contains $m^{1/4}$ instead of $m^{1/4}\big / n^{\smallexpo}$.
This is because if $\sqrt{m} = \min\left\{\Deltain, \Deltaout, \sqrt{m}\right\}$, then we have $\Deltain \ge \sqrt{m} \ge \sqrt{n}$ and $n^{\smallexpo} = O(1)$.
Thus, we can replace $m^{1/4}\big / n^{\smallexpo}$ by $m^{1/4}$ to obtain the desired bound.

Finally, we describe how to adaptively set $\rmax$ to achieve the desired complexity bound, using an argument similar to \cite{bressan2018sublinear,bressan2023sublinear,wang2024revisiting}.
We set a computational budget of $\tTheta\left(n^{1/2}\min\left\{ \Deltain^{1/2} \big/ n^{\smallexpo}, \Deltaout^{1/2}\big / n^{\smallexpo}, m^{1/4} \right\}\right)$ and perform multiple executions of the randomized push process with exponentially decreasing values of $\rmax$ within this budget.
Specifically, the meta-algorithm iteratively runs the push process with $\rmax=\frac{1}{2}, \frac{1}{4}, \frac{1}{8}, \cdots$ until the total computational cost exceeds the budget, in which case it terminates and returns the output of the last invocation.
By choosing the constant factor in the budget setting properly, we can guarantee that the final choice of $\rmax$ is no larger than the ideal setting discussed above, thereby ensuring that the output of the algorithm satisfies the desired error guarantee.
This completes the whole proof of \Cref{thm:final}.
\end{proof}

\paragraph{Remark.}
When $(1-\alpha)\Deltain > 1$, although our setting of $\iast = \log_{(1-\alpha)\Deltain^2}\big(n/\min\left\{\Deltain,\Deltaout,\sqrt{m}\right\}\big)$ relies on $m$, we can modify the algorithm as follows to achieve the claimed complexity bound without knowing $m$ in advance.
The algorithm first checks if $\min\{\Deltain,\Deltaout\} \le \sqrt{n}$, and if this is true, we have $\min\{\Deltain,\Deltaout,\sqrt{m}\} = \min\{\Deltain,\Deltaout\}$ (since $m \ge n$), so we can simply set $\iast$ according to $\min\{\Deltain,\Deltaout\}$.
Otherwise, using the result in \cite{goldreich2008approximating}, we can estimate $m$ within a constant relative error with constant success probability in expected $\tO(\sqrt{n})$ time, and then we set $\iast$ according to $\Deltain$, $\Deltaout$, and the estimated value of $m$.
For sufficiently large $n$, we do this estimation only when $\min\{\Deltain,\Deltaout\} = \Omega(\sqrt{n})$, in which case the claimed bound becomes $\tO\left(n^{1/2}\min\left\{ \Deltain^{1/2}, \Deltaout^{1/2}, m^{1/4} \right\}\right)$, so the additional cost of $\tO(\sqrt{n})$ for estimating $m$ is negligible.
On the other hand, the approximation error for $m$ would cause $\iast$ to differ by at most $\log_{(1-\alpha)\Deltain^2}\big(O(1)\big)$, which does not affect the asymptotic behavior of quantities like $\big((1-\alpha)\Deltain\big)^{\iast}$ and $(1-\alpha)^{\iprime}$, so the final complexity bound remains unchanged.

\section{Lower Bounds Revisited} \label{sec:Delta_in_lower}

We state the formal version of \Cref{thm:lower_bound_informal} below and prove it.
Our proof framework follows that of \cite[Theorem 7.2]{wang2024revisitinga}, but we use simplified and refined graph constructions and explicitly calculate the exponent $\smallexpo$ as a function of $\Deltain$.

\begin{theorem} \label{thm:lower_bound_formal}
	Choose any integer $p \ge 2$ and any functions $\Deltain(n),\Deltaout(n) \in \Omega(1) \cap O(n)$ and $m(n)\in\Omega(n)\cap O\big(n\Deltain(n)\big)\cap O\big(n\Deltaout(n)\big)$.
	Consider any (randomized) algorithm $\mathcal{A}(t)$ that computes a multiplicative $\big(1 \pm O(1)\big)$-approximation of $\vpi(t)$ w.p. at least $1/p$, where $\mathcal{A}$ can only query the graph oracle to access unseen nodes and edges in the underlying graph.
	Then, for every sufficiently large $n$, there exists a graph $H$ such that: (i) $H$ contains $n$ nodes and $m$ edges, and its maximum in-degree and out-degree are $\Deltain$ and $\Deltaout$, respectively; (ii) $H$ contains a node $t$ such that $\mathcal{A}(t)$ requires $\Omega\left(n^{1/2}\min\left\{ \Deltain^{1/2} \big/ n^{\smallexpo},\ \Deltaout^{1/2} \big/ n^{\smallexpo},\ m^{1/4}\right\}\right)$ queries in expectation, where $\smallexpo$ is defined in \Cref{eqn:smallexpo}.
\end{theorem}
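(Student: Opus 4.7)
The plan is to follow the Yao-style framework of \cite{wang2024revisiting}, reducing the randomized complexity to the problem of distinguishing two graph distributions $\mathcal{H}_0$ and $\mathcal{H}_1$ that share an identical visible skeleton but differ in a hidden sub-structure that points (through some chain of in-edges) towards the target node $t$. In $\mathcal{H}_1$ this hidden structure injects an extra constant-factor boost into $\vpi(t)$ relative to $\mathcal{H}_0$, so any $(1\pm O(1))$-approximation with success probability $\ge 1/p$ must distinguish the two cases. Because the algorithm only learns about the graph by querying the oracle, and the visible parts of $\mathcal{H}_0$ and $\mathcal{H}_1$ are identical, the lower bound reduces to showing that no deterministic algorithm running within the claimed budget can hit the hidden sub-structure with non-trivial probability under a uniform choice of instance.

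The key design choice is the depth and branching factor of this hidden sub-structure. To match the $n^{\smallexpo}$ factor in the theorem, I would plant a layered in-DAG rooted at $t$ of depth $\iast$ with branching close to $\Deltain$, using exactly the depth $\iast = \log_{(1-\alpha)\Deltain^2}\bigl(n/\min\{\Deltain,\Deltaout,\sqrt m\}\bigr)$ that appears in the upper bound, and populating the bottom layer with $\Theta(\Deltain^{\iast})$ ``signal'' nodes, each contributing $\Theta((1-\alpha)^{\iast}/n)$ to $\vpi(t)$. Among the $n$ nodes only this $\Theta(\Deltain^{\iast})$-sized set carries distinguishing information, so a deterministic algorithm making $q$ queries to a uniformly random instance discovers a signal node only with probability $O(q\cdot\Deltain^{\iast}/n)$; requiring this to be $\Omega(1)$ forces
\begin{equation*}
    q \;=\; \Omega\!\left(n/\Deltain^{\iast}\right) \;=\; \Omega\!\left(n^{1/2}\Deltain^{1/2}/n^{\smallexpo}\right),
\end{equation*}
where the last equality follows by substituting our choice of $\iast$ and simplifying with $\smallexpo = \log(1/(1-\alpha))/(4\log\Deltain-2\log(1/(1-\alpha)))$. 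The $\Deltaout^{1/2}/n^{\smallexpo}$ variant is obtained by the symmetric construction that bounds out-degrees along the hidden DAG instead of in-degrees, and the $m^{1/4}$ variant is the construction from \cite{wang2024revisiting} in the regime where the edge budget is the binding constraint. In each variant, standard padding with disjoint dummy gadgets fills the graph up to the required $n$, $m$, $\Deltain$, and $\Deltaout$ without changing $\vpi(t)$ by more than lower-order terms.

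The main obstacle is twofold. First, the algebra linking the three parameters $\Deltain$, $(1-\alpha)$, and the depth $\iast$ to the exponent $\smallexpo$ in \Cref{eqn:smallexpo} must be carried out exactly rather than asymptotically; in particular, to get a lower bound for any fixed $\Deltain$ (rather than the ``for every $\smallexpo'$ there exists some $\Deltain$'' statement of the prior work), one must avoid letting any slack in the branching factor be absorbed into constants, which is what the simplified construction is designed for. Second, the construction degenerates in the small in-degree regime $\Deltain \le 1/(1-\alpha)$, where $(1-\alpha)\Deltain^2$ need not exceed $1$ and the layered DAG no longer admits a useful depth; here I would instead replace the DAG by an in-path of length $\iast = \Theta(\log_{1-\alpha}(1/n))$, carrying a single signal node whose PageRank contribution is $\Theta(1/n)$, matching $\smallexpo = 1/2$ and yielding the trivial $\Omega(1)$ bound required by the theorem. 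Combining these two cases and verifying via Yao's principle that the constructed distribution indeed defeats any randomized algorithm with success probability $\ge 1/p$ completes the proof.
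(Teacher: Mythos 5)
Your high-level plan — reduce to an indistinguishability argument over a family of instances sharing a visible skeleton, set the hidden structure's depth to $\iast$, and handle the $(1-\alpha)\Deltain \le 1$ case separately — is the right framework, and it matches the paper's in spirit. But your construction has a concrete gap: you plant the distinguishing information at the leaves of a $\Deltain$-ary in-tree rooted directly at $t$, and then bound the cost of \emph{finding a signal node} only by the $\jump$-query argument ``$q$ queries hit the $\Theta(\Deltain^{\iast})$-node signal set with probability $O(q\Deltain^{\iast}/n)$.'' This ignores that the signal set sits at bounded depth inside a tree whose root $t$ is handed to the algorithm: a deterministic algorithm can simply issue $\parent$ queries from $t$ and enumerate the entire tree in $\Theta(\Deltain^{\iast})$ queries. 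Substituting $\iast = \log_{(1-\alpha)\Deltain^2}(n/M)$ gives $\Deltain^{\iast} = \Theta\bigl(n^{1/2+\smallexpo}M^{-1/2}\bigr)$, whereas the bound you are trying to prove is $\Omega\bigl(n^{1/2-\smallexpo}M^{1/2}\bigr)$; the ratio is $n^{2\smallexpo}/M$, which tends to $0$ whenever $\Deltain = n^{\Omega(1)}$ (since then $n^{2\smallexpo} = O(1)$). So in exactly the regime of moderate-to-large $\Deltain$ your construction is defeated by local exploration far below the target budget, and the lower bound fails.

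The paper avoids this by interposing a ``wall.'' It places a set $\mathcal{V}$ of $\Theta(n^{1/2-\smallexpo}d^{-1/2+\smallexpo})$ nodes at the bottom of the cheap-to-explore tree above $t$, gives each $v\in\mathcal{V}$ in-degree $d=\min(\Deltain,\Deltaout)$ (padded from a set $\mathcal{W}$), and hides the distinguishing structure $\mathcal{Y}$ behind a \emph{single} unknown in-neighbor $u_*$ of a random $v_*\in\mathcal{V}$. The algorithm can reach $\mathcal{V}$ cheaply, but to proceed it must guess the one correct $\parent(v_*,\cdot)$ query out of $\Theta(d|\mathcal{V}|)$ equally likely candidates, yielding the $\Omega(d|\mathcal{V}|)$ exploration cost that balances the $\Omega(n/|\mathcal{Y}|)$ jump cost. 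Your plan is missing this indirection; simply adding it (with the two tree gadgets on both the $\mathcal{V}$ side and the $\mathcal{Y}$ side, sized so that $(1-\alpha)^{\log_{\Deltain}|\mathcal{Y}|}|\mathcal{Y}| = \Theta(|\mathcal{V}|)$ so each step changes $\vpi(t)$ by a constant factor) essentially recovers the paper's argument. A smaller issue: with two distributions $\mathcal{H}_0,\mathcal{H}_1$ the Yao reduction gives a lower bound for success probability $\ge 1/2$, but the theorem allows success probability $1/p$ for any constant $p\ge 2$; the paper therefore uses a chain of $p+1$ instances $H_0,\dots,H_p$ so that the algorithm must distinguish between $p+1$ alternatives.
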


\begin{proof}
If $\Deltain \le 1/(1-\alpha)$, the desired lower bound becomes $\Omega(1)$, which trivially holds.
If $\min\{\Deltain, \Deltaout\} = \Omega\left(n^{1/3}\right)$, we have $n^{\smallexpo} = O(1)$ and the lower bound is proved in \cite{wang2024revisiting}.
Thus, we assume that $\Deltain > 1/(1-\alpha)$ and $\min\{\Deltain, \Deltaout\} = o\left(n^{1/3}\right)$, in which case the lower bound becomes $\Omega\left(n^{1/2-\smallexpo}\min\{\Deltain,\Deltaout\}^{1/2}\right)$.

For each sufficiently large $n$, we construct a family of graphs $\{H_i\}_{i=0}^p$ serving as hard instances for PageRank estimation.
These instances are designed to be computationally hard to distinguish for algorithm $\mathcal{A}$, yet $\mathcal{A}$ must successfully distinguish them to achieve the required approximation guarantee for $\pi(t)$.

We first describe the structure of $H_p$, as depicted in Figure~\ref{fig:lower_bound}.
There are five special disjoint node sets in $H_p$: $\{t\}$, $\mathcal{U}$, $\mathcal{V}$, $\mathcal{W}$, and $\mathcal{Y}$.
Each node in $\mathcal{V}$ is an ancestor of $t$, and the interconnection structure between $\mathcal{V}$ and $t$ will be described shortly.
We set $|\mathcal{U}|=|\mathcal{V}|$ and add an outgoing edge from each node in $\mathcal{U}$ to an exclusive node in $\mathcal{V}$.
Next, we set $|\mathcal{W}|=|\mathcal{V}|$ and add edges from $\mathcal{W}$ to $\mathcal{V}$, making each node in $\mathcal{V}$ has in-degree $d$ and each node in $\mathcal{W}$ has out-degree $d$, where $d=\min(\Deltain,\Deltaout)$.
We select a node in $\mathcal{V}$ as the special node $v_{*}$, and let $u_{*}$ be the parent of $v_{*}$ that is in $\mathcal{U}$.
Every node in $\mathcal{Y}$ is as an ancestor of $u_{*}$, and the structure between $\mathcal{Y}$ and $u_{*}$ will also be specified shortly.
We can easily add descendants to $t$ (omitted from the figure) to ensure that every node has a nonzero out-degree without affecting our argument.
To ensure that $H_i$ meets the required graph parameters, we augment it with an isolated subgraph (omitted from the figure) such that $H_p$ contains $n$ nodes and $m$ edges, and its maximum in-degree and out-degree are $\Deltain$ and $\Deltaout$, respectively.

\begin{figure}[ht]
	\centering
	\includegraphics[width=0.7\linewidth]{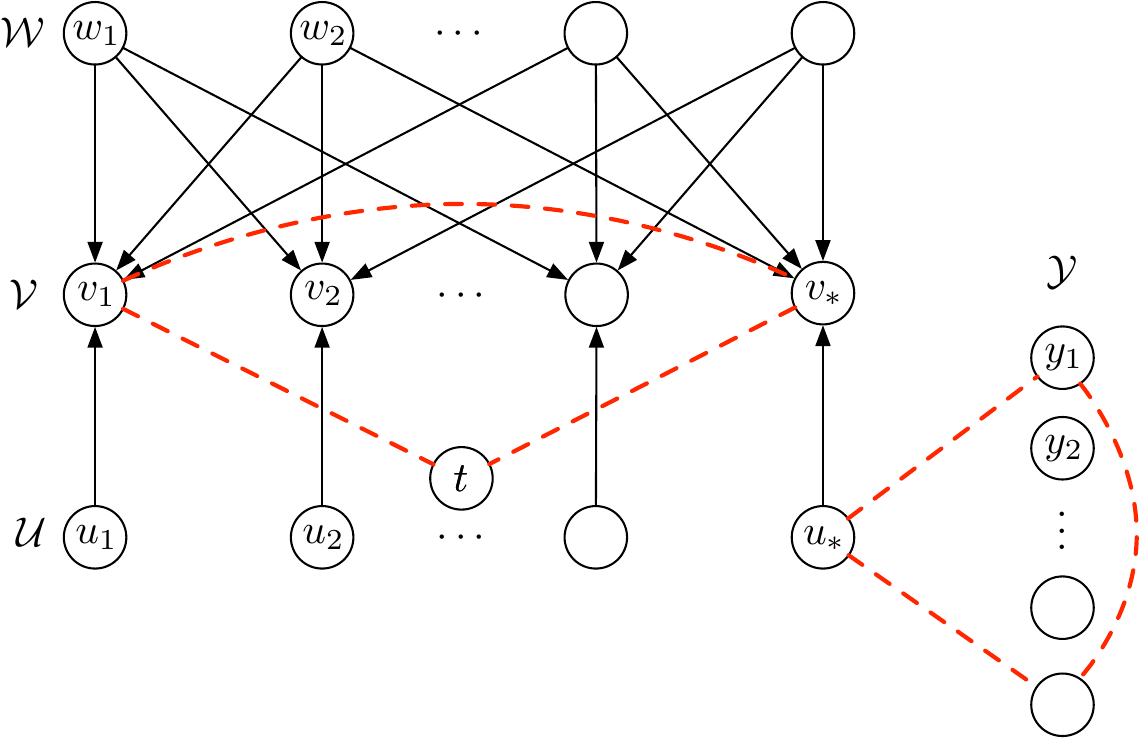}
	\caption{The graph $H_p$.
	The structures between $\mathcal{V},t$ and $\mathcal{Y},u_{\ast}$ are described below.} \label{fig:lower_bound}
\end{figure}

Figure~\ref{fig:multi_level_structure} illustrates the  structure between $\mathcal{V}$ and $t$.
The structure is analogous to a reversed full (but not necessarily complete) $\Deltain$-ary tree, where $t$ is the root and each intermediate node has $\Deltain$ parents.
The leaves (colored in \textcolor{blue}{blue}) are the node set $\mathcal{V}$.
Clearly, the depth of this structure is $\log_{\Deltain}|\mathcal{V}| \pm O(1)$, and the number of nodes in the structure is dominated by $|\mathcal{V}|$.
The structure between $\mathcal{Y}$ and $u_{\ast}$ follows the identical construction pattern, but we will set $|\mathcal{V}|$ and $|\mathcal{Y}|$ differently.

\begin{figure}[ht]
	\centering
	\includegraphics[width=0.5\linewidth]{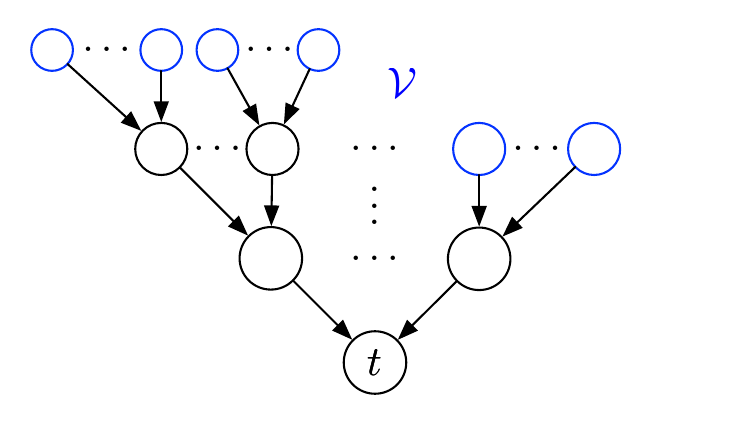}
	\caption{The structure between $\mathcal{V}$ and $t$, where the ``leaves'' (colored in \textcolor{blue}{blue}) form the node set $\mathcal{V}$ and each intermediate node has in-degree $\Deltain$.
	The structure between $\mathcal{Y}$ and $u_{\ast}$ are constructed in exactly the same way.} \label{fig:multi_level_structure}
\end{figure}

For each $0 \le i < p$, we construct $H_i$ to be identical to $H_p$ except that, $i/p\cdot|\mathcal{Y}|$ nodes in $\mathcal{Y}$ have an outgoing edge like in $H_p$, and other nodes in $\mathcal{Y}$ only have a self-loop.
Consequently, $\mathcal{A}$ must detect at least one node in the structure between $u_{\ast}$ and $\mathcal{Y}$ to distinguish among $\{H_i\}_{i=0}^{p}$, because other parts of the graphs are identical.

We now analyze the scenario where the underlying graph is uniformly sampled from $\{H_i\}_{i=0}^p$, with $v_{\ast}$ chosen uniformly from $\mathcal{V}$ and the order of the nodes and edges are permuted uniformly randomly.
We argue that under this distribution, algorithm $\mathcal{A}$ requires $\Omega\left(\min\left(d|\mathcal{V}|,n/|\mathcal{Y}|\right)\right)$ queries in expectation to distinguish $\{H_i\}_{i=0}^p$ with probability at least $1/p$.
Note that $\mathcal{A}$ can only use the $\textsc{jump}()$ operation or local exploration.
For $\textsc{jump}()$ queries, $\mathcal{A}$ needs $\Omega\left(n/|\mathcal{Y}|\right)$ attempts to detect a node in $\mathcal{Y}$ with success probability $1/p$.
Failing that, $\mathcal{A}$ must detect $u_{\ast}$ through local exploration from $t$.
To achieve this, $\mathcal{A}$ must invoke the specific query of the form $\parent(v_{*},\cdot)$ that returns $u_{\ast}$; however, among the $\Theta\left(d|\mathcal{V}|\right)$ possible queries of the form $\parent(v,\cdot)$ where $v \in \mathcal{V}$, each has equal probability of returning $u_{*}$, necessitating $\Omega\left(d|\mathcal{V}|\right)$ queries for non-negligible success probability.
The combination of these two cases gives the lower bound of $\Omega\left(\min\left(d|\mathcal{V}|,n/|\mathcal{Y}|\right)\right)$.
Now we set
\begin{align*}
	|\mathcal{V}| = \Theta\left(n^{\frac{1}{2}-\smallexpo}d^{-\frac{1}{2}+\smallexpo}\right), \quad |\mathcal{Y}| = \Theta\left(n^{\frac{1}{2}+\smallexpo}d^{-\frac{1}{2}-\smallexpo}\right).
\end{align*}
This setting gives the lower bound of $\Omega\left(\min\big(d|\mathcal{V}|,n/|\mathcal{Y}|\big)\right) = \Omega\left(n^{\frac{1}{2}-\smallexpo}\min\{\Deltain,\Deltaout\}^{\frac{1}{2}+\smallexpo}\right)$.

Lastly, we show that $\vpi^{(i)}(t) = \big(1+\Omega(1)\big)\vpi^{(i-1)}(t)$ holds for each $1\le i\le p$, where $\vpi^{(i)}(t)$ denotes the PageRank score of $t$ in $H_i$.
By the definition of PageRank, we can compute that in each $H_i$, the PageRank score of each node in $\mathcal{V} \setminus v_{\ast}$ is $\Theta(1/n)$ and $\vpi(u_{*}) = \Theta\left((1-\alpha)^{\log_{\Deltain}|\mathcal{Y}|} \cdot |\mathcal{Y}|/n\right)$.
Additionally, $\vpi^{(i)}(t) = \Theta\left((1-\alpha)^{\log_{\Deltain}|\mathcal{V}|} \left(|\mathcal{V}| + (1-\alpha)^{\log_{\Deltain}|\mathcal{Y}|} \cdot |\mathcal{Y}|\right) / n\right)$.
We also have $\vpi^{(i)}(t)-\vpi^{(i-1)}(t) = \Theta\left((1-\alpha)^{\log_{\Deltain}|\mathcal{V}|}(1-\alpha)^{\log_{\Deltain}|\mathcal{Y}|} \cdot |\mathcal{Y}|/n\right)$ for each $1\le i\le p$.
Now we can check that
\begin{align*}
	& \phantom{{}={}} (1-\alpha)^{\log_{\Deltain}|\mathcal{Y}|} \cdot |\mathcal{Y}| = |\mathcal{Y}|^{\log_{\Deltain}(1-\alpha)} \cdot |\mathcal{Y}| = |\mathcal{Y}|^{\frac{\log\Deltain-\log(1/(1-\alpha))}{\log\Deltain}} \\
	& = \Theta\left(n^{\left(\frac{1}{2}+\smallexpo\right)\cdot\frac{\log\Deltain-\log(1/(1-\alpha))}{\log\Deltain}} d^{\left(-\frac{1}{2}-\smallexpo\right)\cdot\frac{\log\Deltain-\log(1/(1-\alpha))}{\log\Deltain}}\right) \\
	& = \Theta\left(n^{\frac{\log\Deltain-\log(1/(1-\alpha))}{2\log\Deltain-\log(1/(1-\alpha))}} d^{\frac{-\log\Deltain+\log(1/(1-\alpha))}{2\log\Deltain-\log(1/(1-\alpha))}}\right) \\
	& = \Theta\left(n^{\frac{1}{2}-\smallexpo}d^{-\frac{1}{2}+\smallexpo}\right) = \Theta\big(|\mathcal{V}|\big),
\end{align*}
where we used $\smallexpo = \frac{\log(1/(1-\alpha))}{4\log\Deltain - 2\log(1/(1-\alpha))}$.
Thus, $\vpi^{(i)}(t)-\vpi^{(i-1)}(t) = \Theta\left(\vpi^{(i-1)}(t)\right)$, which ensures that $\vpi^{(i)}(t) = \big(1+\Omega(1)\big)\vpi^{(i-1)}(t)$.
Hence, to output a multiplicative $\big(1 \pm O(1)\big)$-approximation of $\vpi(t)$, $\mathcal{A}$ must distinguish the graphs in $\{H_i\}_{i=0}^p$, which requires $\Omega\left(n^{1/2-\smallexpo}\min\{\Deltain,\Deltaout\}^{1/2+\smallexpo}\right)$ queries in expectation.
Since $\min\{\Deltain,\Deltaout\}^{\smallexpo} = O\left(\Deltain^{O(1/\log\Deltain)}\right) = O(1)$, this lower bound reduces to $\Omega\left(n^{1/2-\smallexpo}\min\{\Deltain,\Deltaout\}^{1/2}\right)$, which is the desired lower bound.
\end{proof}

\paragraph{Remark.}
Recall that we provided an intuitive explanation for the value of $\smallexpo$ in \Cref{subsec:result}.
Based on the formal proof above, we can now understand the role of $\smallexpo$ more precisely.
Specifically, in our lower-bound construction, the time cost $T$ for an algorithm to find the node $u_*$ is given by
\begin{align} \label{eqn:value_T}
	T=d|\mathcal{V}|=n/|\mathcal{Y}|,
\end{align}
where here and below the equality symbol denotes asymptotic equality.
The value of $\vpi(t)$ equals
\begin{align} \label{eqn:pagerank_equal}
	\vpi(t)=|\mathcal{V}|/n=|\mathcal{Y}| (1-\alpha)^h /n,
\end{align}
where $h=\log_{\Deltain} |\mathcal{Y}|$ is the height of the rooted tree on $u_*$.
Substituting $|\mathcal{V}| = T/d$ and $|\mathcal{Y}| = n/T$ from \Cref{eqn:value_T} into \Cref{eqn:pagerank_equal}, we have
\begin{align*}
	T/d = (1-\alpha)^h n/T, 
\end{align*}
which further implies that $T=n^{1/2} d^{1/2} (1-\alpha)^{h/2}$.
Comparing this with our lower bound that $T=n^{1/2 - \smallexpo} d^{1/2}$ gives $n^{-2\smallexpo}=(1-\alpha)^{h}$.

\section{Acknowledgments}

This work was supported by the VILLUM Foundation grant 54451 and National Natural Science Foundation of China (No. U2241212).

\appendix

\section{Appendix} \label{sec:appendix}

\subsection{Deferred Proofs} \label{sec:deferred_proofs}

We will use the following version of the Chernoff bound.

\begin{theorem} [Chernoff bound, derived from \protect{\cite[Theorems 4.1 and 4.4]{chung2006survey}}] \label{thm:chernoff}
    Let $X_i$ for $1 \le i \le k$ be independent random variables such that $0 \le X_i \le r$ and $\E[X_i]=\mu$, and let $\bar{X} = \frac{1}{k}\sum_{i=1}^k X_i$.
    Then, for any $\lambda > 0$,
    \begin{align*}
        \Pr\left\{\left|\bar{X} - \mu\right| \ge \lambda\right\} \le 2\exp\left(-\frac{\lambda^2 k}{2r(\mu+\lambda/3)}\right).
    \end{align*}
\end{theorem}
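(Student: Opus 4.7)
The plan is to reduce the statement to the $[0,1]$-valued case and then invoke the standard one-sided Chernoff bounds cited as \cite[Theorems 4.1 and 4.4]{chung2006survey}, combining them via a union bound to handle the two-sided deviation. Since the theorem is explicitly billed as a derivation from those references, the main work is a normalization argument plus a bit of algebra to massage the cited form into the form above.

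Concretely, I would first set $Y_i = X_i / r$ for each $i \in [1,k]$, so that $0 \le Y_i \le 1$, the $Y_i$'s remain mutually independent, and $\E[Y_i] = \mu/r =: \mu'$. Let $\bar{Y} = \frac{1}{k}\sum_{i=1}^k Y_i = \bar{X}/r$, and set $\lambda' = \lambda/r$. Then the event $|\bar{X}-\mu|\ge\lambda$ is identical to $|\bar{Y}-\mu'|\ge\lambda'$, so it suffices to bound the latter. At this point I would invoke the two one-sided Chernoff bounds for sums of independent $[0,1]$-valued random variables from \cite{chung2006survey}: an upper-tail bound of the form $\Pr\{\bar{Y}-\mu' \ge \lambda'\} \le \exp\!\bigl(-k(\lambda')^2/(2(\mu'+\lambda'/3))\bigr)$, and a (stronger) lower-tail bound of the form $\Pr\{\bar{Y}-\mu' \le -\lambda'\} \le \exp\!\bigl(-k(\lambda')^2/(2\mu')\bigr)$. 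The latter is dominated by the former since $\mu' \le \mu' + \lambda'/3$, so both one-sided probabilities are upper-bounded by $\exp\!\bigl(-k(\lambda')^2/(2(\mu'+\lambda'/3))\bigr)$, and a union bound gives the factor of $2$.

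Finally, I would unwind the substitutions: plugging $\lambda' = \lambda/r$ and $\mu' = \mu/r$ into the exponent yields
\begin{align*}
\frac{k(\lambda/r)^2}{2(\mu/r + \lambda/(3r))} = \frac{k\lambda^2/r^2}{(2/r)(\mu+\lambda/3)} = \frac{\lambda^2 k}{2r(\mu+\lambda/3)},
\end{align*}
which matches the exponent in the theorem statement exactly. Combining this with the union bound over the two tails gives the claimed inequality.

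The only non-routine point — and the one I would be careful about — is matching the precise form of the cited bounds. Chung and Lu's Theorems 4.1 and 4.4 are typically stated in slightly different parameterizations (sometimes in terms of variance, or with $S=\sum X_i$ rather than the sample mean $\bar{X}$, or with $\mu$ denoting $\sum \E[X_i]$ instead of the common mean). So the main obstacle is a careful translation: reconciling notation, confirming that the cited upper-tail bound really yields the $\mu+\lambda/3$ denominator (a Bernstein-type refinement, not the weaker Hoeffding form), and verifying that independence, rather than identical distribution, is all that is needed. Once those bookkeeping details are matched, the proof is immediate from scaling and the union bound as above.
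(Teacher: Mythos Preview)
The paper does not actually prove this theorem: it is stated in the appendix as a known result ``derived from \cite[Theorems 4.1 and 4.4]{chung2006survey}'' and is then used without further justification in the proof of \Cref{lem:mc_bound}. Your proposal supplies exactly the routine derivation that the paper omits---rescale to $[0,1]$, apply the two one-sided bounds from Chung--Lu, and take a union bound---and the algebra you give is correct; this is the standard way to obtain the stated form, so there is nothing to compare against and nothing to fix.
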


\begin{proof}[Proof of \Cref{lem:mc_bound}]
For each $i \in [1,n_r]$ and $v \in V$, let $X_i(v)$ be the random variable that equals $(1-\alpha)^{\iprime}$ if the $i$-th random walk sampling in \Cref{alg:MC} terminates at $v$, and equals $0$ otherwise.
The algorithm computes each $\tpi(v)$ as $\frac{1}{n_r}\sum_{k=1}^{n_r}X_i(v)$.
Let $P_i(v)$ be the probability that a standard random walk starting from a uniformly chosen node in $V$ is at $v$ after exactly $i$ steps.
By the definition of PageRank, we have $\vpi_i(v) = \alpha(1-\alpha)^{i}P_i(v)$.
Thus, by the procedure of \Cref{alg:MC},
\begin{align*}
    \E\big[X_i(v)\big] = (1-\alpha)^{\iprime}\sum_{i=\iprime}^{\infty}\alpha(1-\alpha)^{i-\iprime} P_{i}(v) = \sum_{i=\iprime}^{\infty}\alpha(1-\alpha)^{i} P_{i}(v) = \sum_{i=\iprime}^{\infty}\vpi_i(v) = \vpilarge(v).
\end{align*}
Consequently, $\E\big[\tpi(v)\big] = \vpilarge(v)$ for each $v \in V$.

Next, we prove that with probability at least $19/20$, $\big|\tpi(v)-\vpilarge(v)\big| \le \frac{1}{20} \vpilarge(v)$ holds for all $v \in V$ with $\vpilarge(v) \ge \frac{1}{2}\eps$, and $\big|\tpi(v)-\vpilarge(v)\big| \le \frac{1}{20} \eps$ holds for all $v \in V$ with $\vpilarge(v) < \frac{1}{2}\eps$.
For $\vpilarge(v) \ge \frac{1}{2}\eps$, by the Chernoff bound (\Cref{thm:chernoff}), we have
\begin{align*}
    & \phantom{{}={}} \Pr\left\{\big|\tpi(v)-\vpilarge(v)\big| \ge \frac{1}{20} \vpilarge(v)\right\} \\
    & \le 2\exp\left(-\frac{\left(\frac{1}{20} \vpilarge(v)\right)^2 n_r}{2(1-\alpha)^{\iprime}\left(\vpilarge(v)+\frac{1}{20} \vpilarge(v)/3\right)}\right) \le 2\exp\left(-\frac{\vpilarge(v)n_r}{1600(1-\alpha)^{\iprime}}\right) \\
    & \le 2\exp\left(-\frac{\frac{1}{2}\eps n_r}{1600(1-\alpha)^{\iprime}}\right) \le \frac{1}{20n},
\end{align*}
since $n_r = \left\lceil\frac{3200(1-\alpha)^{\iprime}\ln(40n)}{\eps}\right\rceil$.
For $\vpilarge(v) < \frac{1}{2}\eps$, we similarly have
\begin{align*}
    & \phantom{{}={}} \Pr\left\{\big|\tpi(v)-\vpilarge(v)\big| \ge \frac{1}{20}\eps\right\} \\
    & \le 2\exp\left(-\frac{\left(\frac{1}{20} \eps\right)^2 n_r}{2(1-\alpha)^{\iprime}\left(\vpilarge(v)+\frac{1}{20} \eps/3\right)}\right) \le 2\exp\left(-\frac{\left(\frac{1}{20} \eps\right)^2 n_r}{2(1-\alpha)^{\iprime}\left(\frac{1}{2}\eps+\frac{1}{20} \eps/3\right)}\right) \\
    & \le 2\exp\left(-\frac{\eps n_r}{800(1-\alpha)^{\iprime}}\right) \le \frac{1}{20n}.
\end{align*}
Applying the union bound to all $v \in V$ proves the claim.

Finally, it suffices to show that the above conditions imply that event $E$ holds.
Recall that $\seteps$ is set to be $\big\{v \in V \mid \tpi(v) \ge \eps\big\}$ and $\setepsminus$ is set to be $V \setminus \seteps$.
For \Cref{eqn:E_1}, note that $v \in \seteps$ must satisfy $\vpilarge(v) \ge \frac{1}{2}\eps$ (otherwise, $\vpilarge(v) < \frac{1}{2}\eps$ and $\tpi(v) < \frac{1}{2}\eps + \frac{1}{20}\eps < \eps$), and thus we can guarantee that $\big|\tpi(v)-\vpilarge(v)\big| \le \frac{1}{20} \vpilarge(v)$ for these nodes.
For \Cref{eqn:E_2}, note that if $\vpilarge(v) < \frac{2}{3}\eps$, then $\tpi(v) < \frac{2}{3}\eps + \frac{1}{20}\eps < \eps$, so $v \in \seteps$ must satisfy $\vpilarge(v) \ge \frac{2}{3}\eps$.
For \Cref{eqn:E_3}, note that if $\vpilarge(v) > 2\eps$, then $\tpi(v) \ge \frac{19}{20}\vpilarge(v) > \eps$, so $v \in \setepsminus$ must satisfy $\vpilarge(v) \le 2\eps$.
This completes the proof.
\end{proof}

\printbibliography

\end{document}